\newtheorem{lemma}{\hskip\parindent\bf{Lemma}}
\begin{document}

\title{MIMO-OFDM Dual-Functional Radar-Communication Systems: Low-PAPR Waveform Design}

\author{Xiaoyan Hu,~\IEEEmembership{Member,~IEEE,} Christos Masouros,~\IEEEmembership{Senior Member,~IEEE,} \\ Fan Liu,~\IEEEmembership{Member,~IEEE,}
Ronald Nissel,~\IEEEmembership{Member,~IEEE,} 

\thanks{X. Hu and C. Masouros are with the Department of Electronic and Electrical Engineering, University College London, London  WC1E 7JE, UK (Email: $\{\rm xiaoyan.hu, c.masouros\}@ucl.ac.uk$).}
\thanks{Fan Liu  is with the Department of Electical and Electronic Engineering, Southern University of Science and Technology, Shenzhen, China ( Email: $\rm liuf6@sustech.edu.cn$).}
\thanks{R. Nissel is with Huawei Technologies, Gothenburg, Sweden (Email: $\rm ronald.nissel@huawei.com$).}
}
\maketitle
\begin{abstract}
In this paper, we explore a dual-functional radar-communication (DFRC) system for achieving integrated sensing and communications (ISAC). The technique of orthogonal frequency division multiplexing (OFDM) is leveraged to overcome the frequency-selective fading of the wideband multiple-input multiple-output (MIMO) systems with one multi-antenna DFRC base station (BS) and multiple single-antenna user equipment (UEs). In order to restrain the high peak-to-average power ratio (PAPR) of OFDM signals, we aim to jointly design low-PAPR DFRC MIMO-OFDM waveforms. This is done by utilizing a weighted objective function on both communication and radar performance metrics  under power and PAPR constraints. The formulated optimization problems can be equivalently transformed into standard semi-definite programming (SDP) and can be effectively solved by semi-definite relaxation (SDR) method, where we prove that globally optimal rank-1 solution can be obtained in general. We further develop a low-complexity method  to solve the problems  with much reduced overheads. 
Moreover, the practical scenario with oversampling on OFDM signals is further considered, which has a significant effect on the resulting PAPR levels.
The  feasibility, effectiveness, and flexibility of the proposed low-PAPR DFRC MIMO-OFDM waveform design methods are demonstrated by a range of simulations on communication sum rate, symbol error rate as well as radar beampattern and detection probability.
\end{abstract}
\begin{IEEEkeywords}
Spectrum sharing, DFRC, MIMO radar, communications,  OFDM, PAPR 
\end{IEEEkeywords}

\IEEEpeerreviewmaketitle

\vspace{-2mm}
\section{Introduction}\label{sec:Introduction}
\subsection{Motivations and Prior Works}\label{sec:PW}
The explosive growth of mobile and Internet-of-things (IoT) devices along with the severe spectrum shortage have driven the demand for an ever-increasing spectrum usage efficiency. To meet the impending need of massive connectivity in 2025 with around 75 billion devices worldwide \cite{2016Electronics_P.Brown_Billion},  sharing/resuing spectrum resources that have already been assigned to other existing applications is regarded as a promising strategy for future communication networks. Among the available applications, radar spectrum is widely regarded as a promising candidate that can facilitate communication and radar spectrum sharing (CRSS) due to the fact that the radar spectrum provides large frequency bands that are suitable for wireless communications, and thus a win-win spectrum sharing policy can be achieved for both sides \cite{2015Proceedings_HGriffithsRadar}. 
As a step further, with sharing one set of hardware equipment and signal processing frameworks for communications and radar,  dual-functional radar-communication (DFRC) design provides a cost-efficient way for achieving integrated sensing and communications (ISAC) \cite{2019SPM_A.Hassanien_Dual,2020SPM_D.Ma_Joint}. In fact, the techniques of DFRC and ISAC are eagerly required by  future intelligent 6G and IoT applications, such as smart cities, smart homes, automatic driving, industry 4.0, etc., which have drawn great attention from both academia and industry recently \cite{2019SPM_A.Hassanien_Dual,2020SPM_D.Ma_Joint,2021TCOMF.Liu_JointRadar,2021ArXiv_A.Zhang_AnOverview}. 

DFRC systems aim at fulfilling wireless communications and radar detections simultaneously  through designing a single transmitted waveform. For \emph{radar-centric \mbox{DFRC} systems} \cite{2019SPM_A.Hassanien_Dual,1963TSET_R.Mealey_AMethod,2007TMTT_G.Saddik_Ultra,2016TSP_A.Hassanien_Dual}, standard radar waveforms are employed and information is introduced by modulating and signaling for communications.
In comparison,  for \emph{communication-centric DFRC systems} \cite{2020SPM_D.Ma_Joint,2011PIEEE_C.Sturm_Waveform,2016RadarConf_D.Gaglione_Fractional,2018TVT_P.Kumari_IEEE802}, the communication waveforms are directly exploited for sensing by extracting the radar information from targets echoes. 
It is known that radar-centric DFRC systems are capable of providing desirable sensing performance but with limited communication rates below 5G requirements, while communication-centric systems can achieve favorable communication performance but with unreliable and difficult-to-tune sensing performance \cite{2021TCOMF.Liu_JointRadar,2021ArXiv_A.Zhang_AnOverview}.
Recently, more attention is focused on \emph{DFRC systems with joint DFRC waveform design} \cite{2018TWC_F.Liu_MU_MIMO,TSP2018_F.Liu_Toward,2020TSP_X.Liu_Joint} to guarantee both sensing and communication performance, which is not limited by any existing radar or communication waveforms and is promising to achieve scalable performance trade-off between the two functionalities.

The technique of orthogonal frequency division multiplexing (OFDM), as a key enabler for 4G and 5G wireless networks, is part of communication standards and has been studied extensively \cite{2004PIEEE_G.Stuber_Broadband,B_D.Tse2005Fundamentals,B2010_Y.Cho_MIMOOFDM}. Actually, OFDM has also recently been exploited for radar sensing  \cite{2011PIEEE_C.Sturm_Waveform,2021TVT_T.Tian_Transmit,2021SJ_C.Shi_Joint,2021TSP_M.Keskin_Limited,2021ArXiv_Z.Xu_AWideband}. 
The orthogonal property of OFDM waveform fulfilled by the discrete Fourier transform (DFT) and inverse DFT (IDFT) operations at transceivers can facilitate signal processing for both communications and radar sensing. 
A pioneer work \cite{2011PIEEE_C.Sturm_Waveform} considered a communication-centric DFRC multiple-input multiple output (MIMO) OFDM system which provided a way of introducing radar sensing into a MIMO-OFDM system.
In \cite{2021TVT_T.Tian_Transmit}, a radar-centric DFRC MIMO-OFDM system with sidelobe control for communications was studied where the transmit and receive beamforming were optimized to maximize the radar metric of Kullback-Leibler divergence. The total radiated power was minimized in another radar-centric DFRC work \cite{2021SJ_C.Shi_Joint} through subcarrier selection and power allocation with a primary radar purpose and a secondary communications purpose.
Both radar-centric and joint design scenarios were addressed in recent work \cite{2021TSP_M.Keskin_Limited} where the DFRC trade-off waveform was optimized by taking the feedback overhead for conveying transmit waveform control information into consideration.
A wideband DFRC MIMO-OFDM waveform was jointly designed in \cite{2021ArXiv_Z.Xu_AWideband}, where precoding and antenna selection matrices were optimized to meet a joint communication-sensing performance.

Even though OFDM waveform is an excellent candidate for joint design of DFRC systems, one major disadvantage of OFDM waveform is high peak-to-average power ratio (PAPR) which should be effectively dealt with. 
Otherwise, high PAPR may cause non-linear distortion of the transmit signals and lead to radar/communication performance degradation considering the limited linear region of the low-cost power amplifiers \cite{WC2005_S.Han_AnOverview,2009JCM_D.Lim_AnOverview,2013SPM_G.Wunder_ThePAPR}.
In \cite{2021TVT_T.Tian_Transmit}, the PAPR constraints were considered in a radar-centric DFRC system to restrain the PAPR of MIMO-OFDM waveform.
However, to the best of our knowledge, most of the DFRC OFDM systems in the state-of-the-art literature have not taken the PAPR constraints into consideration, especially for the joint design DFRC systems. Moreover, PAPR is originally defined on the continuous-time passband signals \cite{B2010_Y.Cho_MIMOOFDM}, and thus oversampling may be necessary for obtaining accurate PAPR levels based on the discrete-time MIMO-OFDM waveform.
It is verified that the discrete-time OFDM signal can get almost the same PAPR as the continuous-time signal if it is $\Upsilon$-times interpolated (oversampled) with $\Upsilon\geq 4$ \cite{B2010_Y.Cho_MIMOOFDM,WC2005_S.Han_AnOverview}.


\vspace{-3mm}
\subsection{Our Contributions}\label{sec:Contributions}
In this paper, we consider a wideband  DFRC multi-user MIMO (MU-MIMO) OFDM system, where a multi-antenna DFRC base station (BS) acts as a cellular BS and a MIMO radar to generate DFRC MIMO-OFDM waveforms used for downlink data transmission and far-field targets detection  simultaneously.
We design low-PAPR DFRC MIMO-OFDM waveforms so as to achieve a tunable performance trade-off between wireless communications and radar sensing.

Our main contributions are summarized as follows:
\begin{itemize}
  \item \textbf{\em  We jointly design the DFRC MIMO-OFDM waveforms by optimizing a weighted objective function on both communications and radar performance metrics under transmit power and PAPR constraints.} The PAPR constraints are introduced to restrain the modulus of the DFRC MIMO-OFDM waveforms so as to avoid signal distortion caused by hardware limitation, e.g., low-cost non-linear power amplifiers, leading to energy-efficient data transmissions and target detections.

  \item \textbf{\em We first address low-PAPR DFRC MIMO-OFDM  waveform design in the scenario with Nyquist-rate sampling.} 
      The PAPR of discrete-time DFRC MIMO-OFDM waveform with Nyquist-rate sampling is derived and used for PAPR constraints. The formulated problem is transformed into a standard semi-definite programming (SDP) and can be optimally solved by the semi-definite relaxation (SDR) method, where we can prove that the rank-1 solution exists in general.

  \item \textbf{\em A low-complexity method is further proposed to divide the original problem into sub-problems and obtain an effective solution with much reduced complexity.} All the subproblems can be solved parallelly with much less optimization variables, and thus the computational complexity can be significantly reduced.

  \item \textbf{\em We further consider the practical case of DFRC MIMO-OFDM waveform with an oversampling rate $\Upsilon\geq 4$, and design a new low-PAPR approach.} Oversampling is required in practical OFDM systems with low-cost non-linear power amplifiers, where the PAPR level of ODFM waveform  can be measured in a more accurate way. We derive the PAPR expression and transform the optimization problem for low-PAPR DFRC MIMO-OFDM waveform design  into a standard SDP though elaborate transformations. The global optimal solution can also be achieved by the SDR method which is shown to offer rank-1 solution in general. We further advance this approach with a low-complexity method to obtain an effective solution.
\end{itemize}

The simulation results verify that the PAPR levels of our designed DFRC MIMO-OFDM waveform can be reduced from 12.5dB/10.5dB to 3dB in the case of radar/communication priority with very slight performance degradation.
Satisfactory performance tradeoff  between the communications and radar can be achieved through effectively tuning the weighting factor on these two functionalities, demonstrating the feasibility, effectiveness, and flexibility of the proposed methods for low-PAPR DFRC MIMO-OFDM waveform design.

The rest of this paper is organized as follows. Section \ref{sec:system} describes the considered system models including the DFRC waveform, the communication, and the radar models. The problem formulation as well as the low-PAPR DFRC MIMO-OFDM waveform design with Nyquist-rate sampling are given in Section \ref{sec:Problem_CP}. In Section \ref{sec:Problem_CPOS}, the practical scenario with oversampling is considered where the optimization problem is solved based on a more accurate PAPR measurement. The simulation results are provided in  Section~\ref{sec:simulation}, and we conclude our paper in Section~\ref{sec:conclusion}.

{\em Notations:} Unless  specified otherwise, the upper and lower case bold symbols represent matrices (i.e., $\mathbf{A}$) and vectors (i.e., $\mathbf{x}$). The notations $(\cdot)^\mathrm{*}$, $(\cdot)^\mathrm{T}$ and $(\cdot)^\mathrm{H}$ denote complex conjugate, transpose and Hermitian transpose operations for vectors or matrices. Also, $\mathbf{A}^\dag$ represents the Moore-Penrose pseudo-inverse of matrix $\mathbf{A}$. $\mathrm{Tr}\left\{\mathbf{X}\right\}$ is the trace of square matrix $\mathbf{X}$ and $\mathrm{rank}(\mathbf{A})$ represents the rank of matrix $\mathbf{A}$.  $[\mathbf{A}]_{i,j}$ denotes the $(i,j)$-th element of matrix $\mathbf{A}$.
$\mathrm{diag}(\cdot)$ can  form a diagonal-form matrix using elements in $(\cdot)$. $\mathrm{vec}(\mathbf{A})$ represents vectorization on matrix $\mathbf{A}$. $\otimes$ denotes the Kronecker product. $|\cdot|$, $\left\|\mathbf{x}\right\|$, $\left\|\mathbf{A}\right\|_\mathrm{F}$ denote the absolute value of scalers, the $l_2$ norm of vector $\mathbf{x}$ and Frobenius norm of matrix $\mathbf{A}$, respectively.


\begin{figure}[tbp]
  \centering
  \includegraphics[scale=0.39]{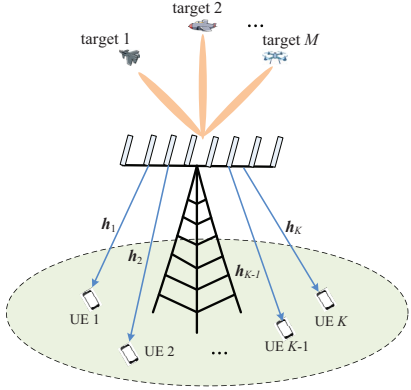}\\
  \vspace{-2mm}
  \caption{An illustration of the DFRC MIMO-OFDM  system, where a $N_\mathrm{t}$-antenna DFRC-BS simultaneously serves $K$ downlink UEs and probes $M$ far-field targets through jointly designing the DFRC MIMO-OFDM waveform.}\label{fig:system_model}
\end{figure}
\vspace{-2mm}
\section{System Model}\label{sec:system}
We consider a wideband DFRC  MU-MIMO system as shown in Fig. \ref{fig:system_model}, where a DFRC-BS equipped with a uniform linear array (ULA) of $N_\mathrm{t}$-antenna transmits DFRC waveforms, aiming at serving $K$ downlink single-antenna user equipment (UEs) and sensing $M$ far-field targets  simultaneously. 

In the considered wideband DFRC MIMO-OFDM system with sufficient bandwidth, high distance resolution can be achieved for radar detection. However, the communication channels with multiple taps undergo significant frequency-selective fading  due to the fact that  the bandwidth is much larger than the channel coherence bandwidth  \cite{B_D.Tse2005Fundamentals}.
It is well known that OFDM is an effective technique for overcoming frequency-selective fading while fully exploiting the frequency diversity, which is widely used in practical communication systems.
Through the OFDM technique, each wideband channel can be divided into multiple orthogonal frequency-flat subchannels \cite{B2010_Y.Cho_MIMOOFDM}.
On the other hand, the OFDM technique introduces the disadvantage of enlarging the PAPR of the transmitted signal waveform, which may lead to crucial signal distortion. Hence, the transmitted signal waveform should be carefully designed so as to effectively restrain the corresponding PAPR of the OFDM signals.

In this paper, the proposed system is denoted as DFRC MIMO-OFDM system, and the transmitted signal waveform is named as DFRC MIMO-OFDM waveform.
Our objective is to effectively design the transmitted DFRC MIMO-OFDM waveform under transmit power and specific PAPR constraints, so as to obtain desirable low-PAPR DFRC MIMO-OFDM waveform  achieving satisfactory performance tradeoff between communications and radar functionalities.

\subsection{DFRC MIMO-OFDM Waveform Formulation}\label{sec:DFRC_OFDM_Model}

We assume that the wireless communication channels are with the memory of $U-1$, i.e., each has $U$  effectively non-zero channel taps.
In addition, the length of the effective data symbols in each block is $N_\mathrm{s}$ which need modulate $N_\mathrm{s}$ subcarriers for transmissions \cite{B_D.Tse2005Fundamentals}, and we denote the index of symbols/subcarriers as $n \in \mathcal{N}_\mathrm{s}=\{1,\cdots, N_\mathrm{s}\}$.
In order to eliminate inter-symbol interference (ISI) of wideband multicarrier transmissions, the standard OFDM technique of  cyclic prefix (CP)  is utilized \cite{B2010_Y.Cho_MIMOOFDM}. 
The length of CP is denoted as $N_\mathrm{c}$ with $N_\mathrm{c}\geq U-1$, and thus the total number of time-domain samples per block is $N=N_\mathrm{s}+N_\mathrm{c}$. Without lose of generality, we use $N_\mathrm{c}= U-1$ in this paper.



\subsubsection{Symbol Data and Preoding Model}
Let 
\begin{align}
\mathbf{S}=[\mathbf{S}{_1^\mathrm{T}},\cdots,\mathbf{S}{_{N_\mathrm{s}}^\mathrm{T}}]^\mathrm{T} \in \mathbb{C}^{N_\mathrm{s}K\times L}
\end{align}
represent the symbol data matrix for all UEs of $k \in \mathcal{K}=\{1,\cdots,K\}$ on  $N_\mathrm{s}$ subcarriers,  transmitted  during a communication frame with length  $L$.
\footnote{$L$ is the length of the radar pulse/communication block, corresponding to the number of OFDM symbols in the time domain.}
Here, $\mathbf{S}_n=[\mathbf{s}_{n,1},\cdots,\mathbf{s}_{n,K}]^\mathrm{T} \in \mathbb{C}^{K\times L}$ is the symbol matrix for all UEs on subcarrier  $n \in \mathcal{N}_\mathrm{s}$ with $\mathbf{s}_{n,k} \in \mathbb{C}^{L\times 1}$ being the specific symbol vector for user $k$. 
In addition, $\mathbf{W}=\mathrm{diag}(\mathbf{W}_1,\cdots,\mathbf{W}_{N_\mathrm{s}}) \in \mathbb{C}^{N_\mathrm{s}N_\mathrm{t}\times N_\mathrm{s}K}$ is the compact precoding matrix for all UEs on all  subcarriers  where $\mathbf{W}_n=[\mathbf{w}_{n,1},\cdots,\mathbf{w}_{n,K}] \in \mathbb{C}^{N_\mathrm{t}\times K}$ is the precoding matrix for all UEs on subcarrier $n \in \mathcal{N}_\mathrm{s}$ with $\mathbf{w}_{n,k} \in \mathbb{C}^{N_\mathrm{t}\times 1}$ being the specific precoding vector for user $k$. 

The transmit symbol data $\mathbf{S}$ is first precoded by $\mathbf{W}$ in the frequency domain and then converted to the time domain by IDFT operation before subcarrier modulations. 
We further use 
\begin{align}
\mathbf{X}_\mathrm{s}=\mathbf{W}\mathbf{S}=[\mathbf{X}_1^\mathrm{T},\cdots, \mathbf{X}_{N_\mathrm{s}}^\mathrm{T}]^\mathrm{T} \in \mathbb{C}^{N_\mathrm{s}N_\mathrm{t}\times L}
\end{align}
to indicate the baseband precoded symbol matrix on all subcarriers before the IDFT processing where $\mathbf{X}_n=\mathbf{W}_n\mathbf{S}_n \in \mathbb{C}^{N_\mathrm{t}\times L}$ for $n \in \mathcal{N}_\mathrm{s}$.
We denote $\mathbf{F}_\mathrm{s} \in \mathbb{C}^{N_\mathrm{s}\times N_\mathrm{s}}$ as the normalized DFT matrix for data transmissions with 
\begin{align}
F_{n,m}^\mathrm{s}=\frac{1}{\sqrt{N_\mathrm{s}}}e^{-\frac{j2\pi}{N_\mathrm{s}}(n-1)(m-1)}
\end{align}
being the ($n,m$)-th element of $\mathbf{F}_\mathrm{s}$ for $n,m \in \mathcal{N}_\mathrm{s}$.
The IDFT processing at the transmitter of the DFRC-BS is operated by
$(\mathbf{F}_\mathrm{s}^\mathrm{H}\otimes \mathbf{I}_{N_\mathrm{t}}) \in \mathbb{C}^{ N_\mathrm{s}N_\mathrm{t}\times N_\mathrm{s}N_\mathrm{t}}$ 
considering the fact that the ULA array is equipped with $N_\mathrm{t}$ transmit antennas.
Hence, the transmitted time-domain data signal waveform after IDFT can be further expressed as
\begin{align}\label{Gs_sWaveform}
\hspace{-2mm}\mathbf{G}&\triangleq(\mathbf{F}_\mathrm{s}^\mathrm{H}\otimes \mathbf{I}_{N_\mathrm{t}})\mathbf{X}_\mathrm{s} =[\mathbf{G}_{1}^\mathrm{T},\cdots, \mathbf{G}_{N_\mathrm{s}}^\mathrm{T}]^\mathrm{T} \in \mathbb{C}^{N_\mathrm{s}N_\mathrm{t}\times L},
\end{align}
where $\mathbf{G}_{n}=(\mathbf{f}_{\mathrm{s},n}^\mathrm{H}\otimes \mathbf{I}_{N_\mathrm{t}})\mathbf{X}_\mathrm{s} \in \mathbb{C}^{N_\mathrm{t}\times L}$ is the transmitted DFRC MIMO-OFDM signal from $N_\mathrm{t}$ antennas on the $n$-th subcarrier for $n\in\mathcal{N}_s$.
Also, $\mathbf{f}_{\mathrm{s},n} \in \mathbb{C}^{N_\mathrm{s}\times 1}$ indicates the $n$-th column of the DFT matrix $\mathbf{F}_\mathrm{s}$ corresponding to the (I)DFT operations on the $n$-th subcarrier.

\subsubsection{DFRC MIMO-OFDM Waveform with CP}
The next step is adding CP with size $N_\mathrm{c}= U-1$, which is crucial for eliminating the ISI caused by multi-path frequency-selective fading.
It is operated by repeating  the last $N_\mathrm{c}$ symbols at the beginning of the original symbol sequence \cite{TCOM2002_B.Muquet_Cyclic}.
We denote 
\begin{align}
&\hspace{-1mm}\mathbf{X}_\mathrm{c}=[\mathbf{X}_{N_\mathrm{s}-N_\mathrm{c}+1}^\mathrm{T},\mathbf{X}_{N_\mathrm{s}-N_\mathrm{c}+2}^\mathrm{T},\cdots, \mathbf{X}_{N_\mathrm{s}}^\mathrm{T}]^\mathrm{T} \in \mathbb{C}^{N_\mathrm{c}N_\mathrm{t}\times L}, \label{Xc}\\
&\hspace{-1mm}\mathbf{F}_\mathrm{c}=[\mathbf{f}_{\mathrm{s},N_\mathrm{s}-N_\mathrm{c}+1}, \mathbf{f}_{\mathrm{s},N_\mathrm{s}-N_\mathrm{c}+2},\cdots, \mathbf{f}_{\mathrm{s},N_\mathrm{s}}] \in \mathbb{C}^{N_\mathrm{s}\times N_\mathrm{c}}, \label{Fc}
\end{align}
and let  
\begin{align}
&\dot{\mathbf{X}}=[\mathbf{X}_\mathrm{c}^\mathrm{T}, \mathbf{X}_\mathrm{s}^\mathrm{T}]^\mathrm{T} \in \mathbb{C}^{NN_\mathrm{t}\times L},\\
&\dot{\mathbf{F}}=[\mathbf{F}_\mathrm{c}, \mathbf{F}_\mathrm{s}]\in \mathbb{C}^{N_\mathrm{s}\times N}, 
\end{align}
then the transmitted DFRC MIMO-OFDM waveform after adding CP can be expressed as 
\begin{align}
\hspace{-2mm}&\dot{\mathbf{G}}\triangleq(\dot{\mathbf{F}}^\mathrm{H}\otimes \mathbf{I}_{N_\mathrm{t}})\dot{\mathbf{X}} \in \mathbb{C}^{NN_\mathrm{t}\times L} \nonumber\\
&=[\mathbf{G}_{N_\mathrm{s}-N_\mathrm{c}+1}^\mathrm{T},\mathbf{G}_{N_\mathrm{s}-N_\mathrm{c}+2}^\mathrm{T},\cdots, \mathbf{G}_{N_\mathrm{s}}^\mathrm{T},\mathbf{G}_{1}^\mathrm{T},\cdots, \mathbf{G}_{N_\mathrm{s}}^\mathrm{T}]^\mathrm{T},
\end{align}
recalling that $N=N_\mathrm{c}+N_\mathrm{s}$.
Through operating the IDFT and adding the CP at the transmitter side, the effects of ISI caused by frequency-selective fading can be eliminated after removing the CP and operating the DFT at the receiver side.


\subsection{MIMO-OFDM Communication Model}\label{sec:CommsModel}
\subsubsection{Overall Downlink Channel and Received Signal}
After adding the CP, the signal is then up-converted to the radio frequency (RF) domain through subcarrier modulations for transmission via $N_\mathrm{t}$ RF chains connected to $N_\mathrm{t}$ antennas. 

As mentioned before, the wireless channels from the DFRC-BS to the downlink UEs are assumed as wideband frequency-selective fading channels with memory of $U-1$, which is assumed time-invariant for a sufficiently long time-scale \cite{B_D.Tse2005Fundamentals}. 
Hence, the discrete-time impulse response of the channel from the DFRC-BS to UE $k \in \mathcal{K}$  can be denoted as
\begin{align}
{\widetilde{\mathbf{h}}}_k(\tau)=\sum_{u=0}^{U-1} \widetilde{\mathbf{h}}_{k,u}\delta(\tau-u)\in \mathbb{C}^{1\times N_\mathrm{t}}, \ \forall k \in \mathcal{K},\ \forall \tau
\end{align}
where $\widetilde{\mathbf{h}}_{k,u}\in \mathbb{C}^{1\times N_\mathrm{t}}$ is the time-invariant channel vector of the $u$-th tap   and it is assumed as $\widetilde{\mathbf{h}}_{k,u}\sim\mathcal{CN}(0,\frac{1}{U}\mathbf{I}_{N_\mathrm{t}})$ with independent identically distributed (i.i.d.)  Rayleigh fading coefficients for any $k \in \mathcal{K}$ and $u \in \mathcal{U}=\{0,\cdots, U-1\}$.
Furthermore, we use 
\begin{align} \label{Channel_Compact_T}
{\widetilde{\mathbf{H}}}(\tau)=\sum_{u=0}^{U-1} \widetilde{\mathbf{H}}_{u}\delta(\tau-u) \in \mathbb{C}^{K\times N_\mathrm{t}},\ \forall \tau
\end{align}
to represent the compact channel matrix from the DFRC-BS to $K$ downlink UEs, where
$\widetilde{\mathbf{H}}_{u}=[{\widetilde{\mathbf{h}}}_{1,u}^\mathrm{T}, \cdots, {\widetilde{\mathbf{h}}}_{K,u}^\mathrm{T}]^\mathrm{T} \in \mathbb{C}^{K\times N_\mathrm{t}}$ is the corresponding channel matrix of the $u$-th tap.

\begin{lemma}\label{lemma0}
The overall effective downlink channel matrix of the DFRC MIMO-OFDM  system on the $n$-th symbol subcarrier can be written in  frequency domain as \cite{B_D.Tse2005Fundamentals,TCOM2002_B.Muquet_Cyclic}
\begin{align} \label{fre_sele_channel2}
\mathbf{H}_n=\sum_{u=0}^{U-1} \widetilde{\mathbf{H}}_{u}e^{-\frac{j2\pi u(n-1)}{N_\mathrm{s}}}, \ \forall n \in \mathcal{N}_\mathrm{s}.
\end{align}
Also, we have $\mathbf{H}_n=[{\mathbf{h}}_{1,n}^\mathrm{T}, \cdots, {\mathbf{h}}_{K,n}^\mathrm{T}]^\mathrm{T} \in \mathbb{C}^{K\times N_\mathrm{t}}$ with
${\mathbf{h}}_{k,n}=\sum_{u=0}^{U-1} \widetilde{\mathbf{h}}_{k,u}e^{-\frac{j2\pi u(n-1)}{N_\mathrm{s}}} \in \mathbb{C}^{1\times N_\mathrm{t}}$ being the corresponding channel for UE $k \in \mathcal{K}$. 
In addition, the  received signal of symbol data sequence can be expressed as 
\begin{align}\label{Ys_compact}
\mathbf{Y}_\mathrm{s}=\mathbf{H}_\mathrm{s}\mathbf{X}_\mathrm{s}+\mathbf{Z}_\mathrm{s} \in \mathbb{C}^{N_\mathrm{s}K\times L}
\end{align}
where $\mathbf{H}_\mathrm{s}=\mathrm{diag}(\mathbf{H}_1,\cdots, \mathbf{H}_{N_\mathrm{s}}) \in \mathbb{C}^{N_\mathrm{s}K\times N_\mathrm{s}N_\mathrm{t}}$ and $\mathbf{Z}_\mathrm{s} \in \mathbb{C}^{N_\mathrm{s}K\times L}$ is the matrix of the additive white gaussian noise (AWGN) on $N_\mathrm{s}$ subcarriers with i.i.d.  random variables following $\mathcal{CN}(0,\sigma^2)$.
Specifically, the corresponding noiseless received signal on the subcarrier $n$  can be expressed as $\mathbf{H}_{n}\mathbf{X}_{n}$ for $n \in \mathcal{N}_\mathrm{s}$ \cite{B_D.Tse2005Fundamentals,TCOM2002_B.Muquet_Cyclic}.
\end{lemma}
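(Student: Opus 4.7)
The plan is to derive the frequency-domain input-output relation by tracing the transmitted waveform through the multi-tap channel, exploiting the cyclic-prefix property to convert linear convolution into circular convolution, and then applying the DFT at the receiver so that the well-known circular convolution theorem diagonalizes the channel across subcarriers.

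First I would write the received time-domain signal at the $K$ UEs as the linear convolution of the CP-augmented transmitted waveform $\dot{\mathbf{G}}$ with the multi-tap impulse response $\widetilde{\mathbf{H}}(\tau) = \sum_{u=0}^{U-1} \widetilde{\mathbf{H}}_u \delta(\tau - u)$, plus AWGN. Since $N_\mathrm{c} = U-1$, the CP is long enough that once the first $N_\mathrm{c}$ receive samples (contaminated by the previous block's tail) are discarded, the remaining $N_\mathrm{s}$ samples of the current block coincide exactly with the circular convolution of the block $\mathbf{G}_1,\dots,\mathbf{G}_{N_\mathrm{s}}$ with $\{\widetilde{\mathbf{H}}_u\}_{u=0}^{U-1}$. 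This is the standard CP argument from \cite{TCOM2002_B.Muquet_Cyclic}, and it is where the multi-tap interference is neutralized at the block boundary.

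Next I would apply the normalized $N_\mathrm{s}$-point DFT $(\mathbf{F}_\mathrm{s} \otimes \mathbf{I}_K)$ at the receiver. By the circular convolution theorem, the DFT of the circular convolution equals the entry-wise product of the DFTs; because $\mathbf{G}$ is obtained from $\mathbf{X}_\mathrm{s}$ via $(\mathbf{F}_\mathrm{s}^\mathrm{H} \otimes \mathbf{I}_{N_\mathrm{t}})$, this DFT step precisely inverts the transmit IDFT on the data block and returns $\mathbf{X}_n$ on subcarrier $n$, multiplied by the channel's DFT evaluated at frequency index $n-1$, namely $\mathbf{H}_n = \sum_{u=0}^{U-1} \widetilde{\mathbf{H}}_u e^{-j2\pi u(n-1)/N_\mathrm{s}}$. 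This establishes \eqref{fre_sele_channel2} at the matrix level; reading off the $k$-th row gives the per-UE expression for $\mathbf{h}_{k,n}$.

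Finally I would stack the per-subcarrier relations $\mathbf{Y}_n = \mathbf{H}_n \mathbf{X}_n + \mathbf{Z}_n$ for $n \in \mathcal{N}_\mathrm{s}$ to obtain the compact form \eqref{Ys_compact} with the block-diagonal $\mathbf{H}_\mathrm{s} = \mathrm{diag}(\mathbf{H}_1,\dots,\mathbf{H}_{N_\mathrm{s}})$. The noise term $\mathbf{Z}_\mathrm{s}$ inherits its $\mathcal{CN}(0,\sigma^2)$ i.i.d.\ statistics from the time-domain AWGN because $\mathbf{F}_\mathrm{s}$ is unitary (so $(\mathbf{F}_\mathrm{s}\otimes \mathbf{I}_K)$ is unitary and preserves whiteness). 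The main obstacle, and the only place any real care is needed, is the MIMO bookkeeping: one has to track the Kronecker structure $(\mathbf{F}_\mathrm{s}^\mathrm{H}\otimes \mathbf{I}_{N_\mathrm{t}})$ at the transmitter and $(\mathbf{F}_\mathrm{s}\otimes \mathbf{I}_K)$ at the receiver, and verify that the channel-matrix convolution commutes correctly with these Kronecker factors so that the resulting diagonalization is block-diagonal in the subcarrier index (with $K\times N_\mathrm{t}$ blocks $\mathbf{H}_n$) rather than entry-wise scalar. Apart from this indexing, the proof is a direct consequence of the CP-induced circular convolution and the DFT convolution theorem.
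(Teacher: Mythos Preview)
Your proposal is correct and follows essentially the same route as the paper: CP turns the linear convolution into a block-circulant channel action, and the DFT then diagonalizes it subcarrier-by-subcarrier. The paper's proof merely carries out the ``MIMO bookkeeping'' you flag as the only delicate point in a concrete way, introducing a permutation matrix $\mathbf{\Omega}$ that rearranges the block-circulant channel $\mathbf{H}_k^\mathrm{s}$ into $N_\mathrm{t}$ ordinary circulant blocks (one per transmit antenna), uses $\mathbf{\Omega}^\mathrm{T}(\mathbf{F}_\mathrm{s}^\mathrm{H}\otimes\mathbf{I}_{N_\mathrm{t}})\mathbf{\Omega}=\mathbf{I}_{N_\mathrm{t}}\otimes\mathbf{F}_\mathrm{s}^\mathrm{H}$, and then applies the scalar DFT diagonalization $\mathbf{F}_\mathrm{s}\mathbf{H}_{k,n_\mathrm{t}}^\mathrm{s}\mathbf{F}_\mathrm{s}^\mathrm{H}=\mathbf{\Lambda}_{k,n_\mathrm{t}}$ to each.
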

\begin{proof}
See Appendix~\ref{sec:proof_lem0}.
\end{proof}

In this paper, we assume that the channels are known to the DFRC-BS, which can be obtained through conventional channel estimation methods.

\subsubsection{Multi-user Interference (MUI) and Sum Rate}
The downlink MU-MIMO OFDM transmissions will induce MUI, which will have great effects on the performance of the achievable sum rate as well as the symbol error rate (SER) of the downlink UEs. Assuming that the data symbols for all UEs of $k\in\mathcal{K}$ on all subcarriers of $n\in\mathcal{N}_\mathrm{s}$ follow the same constellation modulation, we can re-express the compact received signal in \eqref{Ys_compact} as
\begin{align}
\mathbf{Y}_\mathrm{s}=\underset{\mathrm{Signal}}{\underbrace{\mathbf{S}}}+\underset{\mathrm{MUI}}
{\underbrace{(\mathbf{H}_\mathrm{s}\mathbf{X}_\mathrm{s}-\mathbf{S})}}+\underset{\mathrm{Noise}}{\underbrace{\mathbf{Z}_\mathrm{s}}},
\end{align}
where $\mathbf{H}_\mathrm{s}\mathbf{X}_\mathrm{s}-\mathbf{S}$ represents the MUI signals caused by multi-user transmissions \cite{TCOM2013_S.Mohammed_PerAntenna}.
The signal-to-interference-plus-noise ratio (SINR) of UE $k\in\mathcal{K}$ on subcarrier $n\in\mathcal{N}_\mathrm{s}$ per frame can be further expressed as
\begin{align}\label{SINRkn}
\mathrm{SINR}_{k,n}=\frac{\mathbb{E}\{|s_{n,k}^l|^2\}}{ \mathbb{E}\{|{\mathbf{h}}_{k,n}\mathbf{x}_n^l-s_{n,k}^l|^2\}+\sigma^2},
\end{align}
where $s_{n,k}^l$ is the $l$-th element of the symbol vector $\mathbf{s}_{n,k}$ and $\mathbf{x}_n^l \in \mathbb{C}^{N_\mathrm{t}\times 1}$ is the $l$-th column of the precoded symbol matrix $\mathbf{X}_n$ for all UEs.
The expectations in  \eqref{SINRkn} are taken with respect to (w.r.t.) the time index  $l \in \mathcal{L}=\{1,\cdots,L\}$  \cite{TSP2018_F.Liu_Toward}, and $\mathbb{E}\{|s_{n,k}^l|^2\}$ is fixed with given constellation mode. Hence, the maximum achievable sum rate of the $K$ downlink UEs on subcarrier $n$ can be given as 
\vspace{-2mm}
\begin{align}
R_{n}=\sum_{n=1}^{K}\log_2\left(1+\mathrm{SINR}_{k,n}\right).
\end{align}
The average sum rate/spectral efficiency of the  MIMO-OFDM system is measured as 
\vspace{-2mm}
\begin{align}
R=\frac{1}{N_\mathrm{s}}\sum_{n=1}^{N_\mathrm{s}}R_{n}.
\end{align}

Note that the MUI is an important performance metric of the DFRC MIMO-OFDM system which should be minimized from the perspective of communications, so as to increase the achievable sum rate and decrease the SER. To this end, in the following we employ the MUI as our objective/cost function for enhancing the communication performance of the DFRC MIMO-OFDM system, given as
\begin{align}
\underset{\mathbf{X}_\mathrm{s}}{\min} ~\big\| \mathbf{H}_\mathrm{s}\mathbf{X}_\mathrm{s}-\mathbf{S} \big\|_\mathrm{F}^2.
\end{align}

\vspace{-4mm}
\subsection{MIMO-OFDM Radar Model}\label{sec:RadarModel}
\subsubsection{Radar Beampattern}
Radar beampattern is a crucial indicator for measuring the radar detection and tracking performance. Note that MIMO radar is capable of achieving higher degrees of freedom (DoFs) than traditional phased-array radar through generating uncorrelated MIMO waveforms \cite{B2008_J.Li_MIMO}.  It has been verified that  designing  MIMO radar beampattern can be equivalently transformed into designing the probing signal waveform as the  radar beampattern is highly related to the covariance matrix of the probing waveform \cite{TSP2018_F.Liu_Toward,B2008_J.Li_MIMO,2007TSP_P.Stoica_OnProbing,TAES2008_D.Fuhrmann_Transmit}. 
For the considered DFRC MIMO-OFDM system, the transmit radar beampattern versus the detection angle $\theta$ is written as
\begin{align}
B_\mathrm{d}(\theta)=\frac{1}{N_\mathrm{s}}\sum_{n=1}^{N_\mathrm{s}} \mathbf{a}^\mathrm{H}(\theta)\mathbf{R}_{\mathrm{G},n}\mathbf{a}(\theta),
\end{align}
which is averaged on the $N_\mathrm{s}$ subcarriers.
Here $\mathbf{a}(\theta) \in \mathbb{C}^{N_\mathrm{t}\times1 }$ is the transmit steering vector given as
\begin{align} 
\mathbf{a}(\theta) =[e^{-j\frac{N_\mathrm{t}-1}{2}\pi\sin\theta},e^{-j\frac{N_\mathrm{t}-3}{2}\pi\sin\theta},\cdots,e^{j\frac{N_\mathrm{t}+1}{2}\pi\sin\theta}]^\mathrm{T},
\end{align}
under the assumption that even number of transmit antennas are equipped at the DFRC-BS ULA and the center of the ULA is chosen as the reference point.
In addition, $\mathbf{R}_{\mathrm{G},n} \in \mathbb{C}^{N_\mathrm{t}\times N_\mathrm{t}}$ is the effective spatial covariance matrix of the transmit DFRC MIMO-OFDM waveform on the $n$-th subcarrier for $n \in \mathcal{N}_\mathrm{s}$ which is defined as
\begin{align}
\mathbf{R}_{\mathrm{G},n}=\frac{1}{L}\mathbf{G}_{n}\mathbf{G}_{n}^\mathrm{H}.
\end{align}
In order to ensure that the covariance matrices $\{\mathbf{R}_{\mathrm{G},n}\}_{n\in\mathcal{N}_\mathrm{s}}$ are positive-definite, we assume that the frame length satisfies $L\geq N_\mathrm{t}$, which is easy to achieve in the considered wideband scenario with approximately time-invariant channels.

\subsubsection{Radar Detection Probability}\label{Pd_SNRR}
From the perspective of radar, another important performance indicator is the detection probability.
To derive the detection probability, we first express the  radar received target echo signal  by the DFRC-BS at the $l$-th snapshot/frame as
\begin{align}
\mathbf{y}_l^\mathrm{r}=\alpha\mathbf{\Upsilon}(\theta)\mathbf{g}_l^\mathrm{s}+\mathbf{z}^\mathrm{r} \in \mathbb{C}^{N_\mathrm{s}N_\mathrm{t}\times 1},
\end{align}
which is an expanded radar received echo vector considering the $N_\mathrm{s}$ i.i.d data streams on the $N_\mathrm{s}$ subcarriers for $l \in \mathcal{L}$. 
Here, $\alpha$ is the complex path loss of the radar-target-radar path, $\mathbf{g}_l^\mathrm{s}=(\mathbf{F}_\mathrm{s}^\mathrm{H}\otimes \mathbf{I}_{N_\mathrm{t}})\mathbf{x}_\mathrm{s}^l \in \mathbb{C}^{N_\mathrm{s}N_\mathrm{t}\times 1}$ with $\mathbf{x}_\mathrm{s}^l \in \mathbb{C}^{N_\mathrm{s}N_\mathrm{t}\times 1}$ being the $l$-th column of the precoded signal matrix $\mathbf{X}_\mathrm{s}$, and $\mathbf{z}^\mathrm{r}\sim \mathcal{CN}(0,\sigma_\mathrm{r}^2\mathbf{I}_{N_\mathrm{s}N_\mathrm{t}})$ is the AWGN noise for radar reception.
In addition, $\mathbf{\Upsilon}(\theta)=\mathbf{I}_{N_\mathrm{s}}\otimes \widetilde{\mathbf{\Upsilon}}(\theta) \in \mathbb{C}^{N_\mathrm{s}N_\mathrm{t}\times N_\mathrm{s}N_\mathrm{t}}$ with $\widetilde{\mathbf{\Upsilon}}(\theta)= \mathbf{a}_\mathrm{r}(\theta)\mathbf{a}_\mathrm{t}^\mathrm{T}(\theta)\in \mathbb{C}^{N_\mathrm{t}\times N_\mathrm{t}}$ where $\mathbf{a}_\mathrm{r}(\theta)$ and $\mathbf{a}_\mathrm{t}(\theta)$ are the transmit and receive steering vector with $\mathbf{a}_\mathrm{r}(\theta)=\mathbf{a}_\mathrm{t}(\theta)=\mathbf{a}(\theta)$.

Next by leveraging the Generalized Likelihood Ratio Test (GLRT) \cite{TSP2018_F.Liu_Toward,TSP2006_I.Bekkerman_Target,PHP1998_S.Kay_Fundamentals}, we can further obtain the asymptotic radar detection probability as
\begin{align}\label{Pd_Pf}
\mathcal{P}_\mathrm{D}=1-\mathfrak{F}_{\mathcal{X}_2^2\left(\mu\right)}(\zeta)=1-\mathfrak{F}_{\mathcal{X}_2^2(\mu)}\left( \mathfrak{F}_{\mathcal{X}_2^2}^{-1}(1-\mathcal{P}_\mathrm{f}) \right),
\end{align}
where $P_\mathrm{f}=1-\mathfrak{F}_{\mathcal{X}_2^2}(\zeta)$ is the false alarm rate and $\mathfrak{F}_{\mathcal{X}_2^2}$ is the chi-squared cumulative distribution function (CDF) with 2 DoFs. In order to endure a constant false alarm rate $\mathcal{P}_\mathrm{f}$, we have $\zeta=\mathfrak{F}_{\mathcal{X}_2^2}^{-1}(1-\mathcal{P}_\mathrm{f})$ under the Neyman-Pearson criterion with $\mathfrak{F}_{\mathcal{X}_2^2}^{-1}$ being the inverse function of the chi-squared CDF with 2 DoFs. In addition, the  function $\mathfrak{F}_{\mathcal{X}_2^2(\mu)}$ in \eqref{Pd_Pf} is the non-central chi-squared CDF with 2 DoFs and the non-central parameter $\mu$ is defined as
\begin{align}
\mu&=L|\alpha|^2P_\mathrm{t}^\mathrm{r}\mathrm{tr}\left( \mathbf{\Upsilon}(\theta)\mathbf{R}_\mathrm{G}\mathbf{\Upsilon}^\mathrm{H}(\theta)(\sigma_\mathrm{r}^2\mathbf{I})^{-1}\right) \nonumber \\
&=\mathrm{SNR}_\mathrm{R}\mathrm{tr}\left(  \mathbf{\Upsilon}(\theta)\mathbf{R}_\mathrm{G}\mathbf{\Upsilon}^\mathrm{H}(\theta)\right),
\end{align}
where $P_\mathrm{t}^\mathrm{r}$ is the power of the DFRC MIMO-OFDM probing waveform and $\mathbf{R}_\mathrm{G}=\frac{1}{L}\mathbf{G}\mathbf{G}^\mathrm{H} \in \mathbb{C}^{N_\mathrm{s}N_\mathrm{t}\times N_\mathrm{s}N_\mathrm{t}}$.
In addition, the signal-to-noise (SNR) ratio of the radar received  target echo signal  is denoted as $\mathrm{SNR}_\mathrm{R}=\frac{L|\alpha|^2P_\mathrm{t}}{\sigma_\mathrm{r}^2}$ \cite{TSP2006_I.Bekkerman_Target}. 


To design our DFRC MIMO-OFDM waveform, we employ a desired benchmark of the MIMO-OFDM radar waveform, denoted as $\mathbf{G}_0\triangleq(\mathbf{F}_\mathrm{s}^\mathrm{H}\otimes \mathbf{I}_{N_\mathrm{t}})\mathbf{X}_{\mathrm{s},0}$ with $\mathbf{X}_{\mathrm{s},0}$ being the radar waveform before IDFT operation, which is capable to achieve desirable detection probability. One such benchmark waveform of $\mathbf{G}_0$ can be obtained by leveraging the \emph{Directional Beampattern Design}.
From the viewpoint of radar performance, it is beneficial to make the DFRC waveform $\mathbf{G}=(\mathbf{F}_\mathrm{s}^\mathrm{H}\otimes \mathbf{I}_{N_\mathrm{t}})\mathbf{X}_\mathrm{s}$ to be as close to $\mathbf{G}_0$ as possible. Hence, the objective/cost function for enhancing the radar performance of the DFRC MIMO-OFDM system is given as
\begin{align}
\underset{\mathbf{X}_\mathrm{s}}{\min} ~\big\| (\mathbf{F}_\mathrm{s}^\mathrm{H}\otimes \mathbf{I}_{N_\mathrm{t}})\mathbf{X}_\mathrm{s}-\mathbf{G}_0 \big\|_\mathrm{F}^2.
\end{align}

\vspace{-4mm}
\section{Low-PAPR DFRC MIMO-OFDM Waveform Design with Nyquist-Rate Sampling}\label{sec:Problem_CP}
In this section, we consider the scenario leveraging the Nyquist-rate sampling  where the sampled sequence corresponds to the symbol sequence \cite{2000TIT_K.Paterson_OntheExistence}.
Our aim is to achieve a desirable  performance tradeoff between communications and radar detections through elaborately designing the precoded DFRC MIMO-OFDM waveform matrix $\mathbf{X}_\mathrm{s} \in \mathbb{C}^{N_\mathrm{s}N_\mathrm{t}\times L}$ under the PAPR as well as the power allocation constraints.\footnote{It is easy to notify that the signal matrix with CP, i.e., $\dot{\mathbf{X}}$, can be totally determined by the symbol signal matrix $\mathbf{X}_\mathrm{s}$ according to their definitions in section \ref{sec:DFRC_OFDM_Model}.}

\vspace{-4mm}
\subsection{PAPR of the DFRC MIMO-OFDM Waveform}\label{sec:PAPR}
PAPR is defined as the ratio between the maximum power and the average power of the complex passband signal  \cite{B2010_Y.Cho_MIMOOFDM}, and thus the PAPR constraint on the considered discrete-time DFRC MIMO-OFDM waveform can be given as
\begin{align}\label{PAPR_C}
\mathrm{PAPR}(\mathbf{X}_\mathrm{s})=\frac{\underset{i,l}{\max} ~\Big|\left[\big(\dot{\mathbf{F}}^\mathrm{H}\otimes \mathbf{I}_{N_\mathrm{t}}\big)\dot{\mathbf{X}}\right]_{i,l}\Big|^2}
{\frac{1}{NN_\mathrm{t}L}\left\|\big(\dot{\mathbf{F}}^\mathrm{H}\otimes \mathbf{I}_{N_\mathrm{t}}\big)\dot{\mathbf{X}} \right\|_\mathrm{F}^2}\leq\varepsilon, 
\end{align}
where $\varepsilon\in[1,NN_\mathrm{t}L]$ is the maximum allowable PAPR threshold of the DFRC MIMO-OFDM  system.
Based on the fact that the radar is usually required to transmit at its maximum available power in practice, we then have the following equality power allocation constraint
\begin{align} \label{PA_C}
\frac{1}{L}\left\|\big(\dot{\mathbf{F}}^\mathrm{H}\otimes \mathbf{I}_{N_\mathrm{t}}\big)\dot{\mathbf{X}}\right\|_\mathrm{F}^2= P_\mathrm{t},
\end{align}
where the power budget $P_\mathrm{t}$ is totally utilized for generating the DFRC MIMO-OFDM waveform.
Hence, this PAPR constraint defined in \eqref{PAPR_C} can be simplified as
\begin{align}\label{PAPR_C1}
&\underset{i,l}{\max} ~\Big|\left[\big(\dot{\mathbf{F}}^\mathrm{H}\otimes \mathbf{I}_{N_\mathrm{t}}\big)\dot{\mathbf{X}}\right]_{i,l}\Big|^2\leq \frac{\varepsilon P_\mathrm{t}}{NN_\mathrm{t}},  \nonumber \\
&~\forall i\in\dot{\mathcal{N}}_\mathrm{t}=\{1,\cdots, NN_\mathrm{t}\}, ~\forall l\in\mathcal{L},
\end{align}
which is equivalent to the following set of constraints
\begin{align}\label{PAPR_C2}
&\left|\Big[\left(\mathbf{F}_\mathrm{s}^\mathrm{H}\otimes \mathbf{I}_{N_\mathrm{t}}\right)\mathbf{X}_\mathrm{s}\Big]_{i,l}\right|^2\leq \frac{\varepsilon P_\mathrm{t}}{NN_\mathrm{t}}, \nonumber \\
&~\forall i\in\mathcal{N}_\mathrm{st}=\{1,\cdots, N_\mathrm{s}N_\mathrm{t}\}, ~\forall l\in\mathcal{L}.
\end{align}

\subsection{Problem Formulation}\label{sec:PF_CP}
In order to achieve desirable performance tradeoff between communications and radar under the PAPR and power allocation constraints,  we can formulate the DFRC MIMO-OFDM waveform optimization problem as
\begin{subeqnarray}\label{eq:OFDM3}
\hspace{-6mm}({\rm P1}): \underset{ \mathbf{X}_\mathrm{s} }{\min} &&\hspace{-4mm} \frac{\rho}{\left\|\mathbf{S}\right\|_\mathrm{F}^2}
\big\| \mathbf{H}_\mathrm{s}\mathbf{X}_\mathrm{s}-\mathbf{S} \big\|_\mathrm{F}^2 + \nonumber \\
&&\hspace{-4mm} \frac{1-\rho}{\|\mathbf{G}_{0}\|_\mathrm{F}^2}
\big\|(\mathbf{F}_\mathrm{s}^\mathrm{H}\otimes \mathbf{I}_{N_\mathrm{t}})\mathbf{X}_\mathrm{s}-\mathbf{G}_0 \big\|_\mathrm{F}^2 \slabel{eq:OFDM3_0} \\
\mathrm{s.t.}
&&\hspace{-4mm} \left|\Big[\left(\mathbf{F}_\mathrm{s}^\mathrm{H}\otimes \mathbf{I}_{N_\mathrm{t}}\right)\mathbf{X}_\mathrm{s}\Big]_{i,l}\right|^2\leq \frac{\varepsilon P_\mathrm{t}}{NN_\mathrm{\mathrm{t}}}, \nonumber \\
&&\hspace{-4mm}  ~\forall i\in\mathcal{N}_\mathrm{st}, ~\forall l\in\mathcal{L}, \slabel{eq:OFDM3_1}\\
&&\hspace{-4mm} \frac{1}{L}\left\|\big(\dot{\mathbf{F}}^\mathrm{H}\otimes \mathbf{I}_{N_\mathrm{t}}\big)\dot{\mathbf{X}}\right\|_\mathrm{F}^2= P_\mathrm{t}, \slabel{eq:OFDM3_2}
\end{subeqnarray} 
where a weighted objective function considering the normalized objectives relating to the performance of communications and radar is leveraged.
It is easy to verify that
\begin{align}
&\left\|\big(\dot{\mathbf{F}}^\mathrm{H}\otimes \mathbf{I}_{N_\mathrm{t}}\big)\dot{\mathbf{X}}\right\|_\mathrm{F}^2 =\left\|\dot{\mathbf{G}} \right\|_\mathrm{F}^2  \nonumber \\
=&\left\| \left(\mathbf{F}_\mathrm{s}^\mathrm{H}\otimes \mathbf{I}_{N_\mathrm{T}} \right)\mathbf{X}_\mathrm{s} \right\|_\mathrm{F}^2+
\left\|\mathbf{\Gamma}_\mathrm{c}\left(\mathbf{F}_\mathrm{s}^\mathrm{H}\otimes \mathbf{I}_{N_\mathrm{T}} \right)\mathbf{X}_\mathrm{s} \right\|_\mathrm{F}^2 \nonumber \\
=&\left\| \mathbf{G}\right\|_\mathrm{F}^2 +\left\| \mathbf{\Gamma}_\mathrm{c}\mathbf{G}\right\|_\mathrm{F}^2
\end{align}
where $\mathbf{\Gamma}_\mathrm{c} \in \mathbb{R}^{N_\mathrm{s}N_\mathrm{t}\times N_\mathrm{s}N_\mathrm{t}}$ defined as
\begin{align}\label{Gcp_G}
\mathbf{\Gamma}_\mathrm{c}=
\left[
 \begin{matrix}
   \mathbf{0}_{(N_\mathrm{s}-N_\mathrm{c})N_\mathrm{t}}   \hspace{-4mm}&\mathbf{0}_{ (N_\mathrm{s}-N_\mathrm{c})N_\mathrm{t}\times N_\mathrm{c}N_\mathrm{t} } \\
   \mathbf{0}_{N_\mathrm{c}N_\mathrm{t} \times(N_\mathrm{s}-N_\mathrm{c})N_\mathrm{t}}    \hspace{-4mm}&\mathbf{I}_{N_\mathrm{c}N_\mathrm{t} }
  \end{matrix}
\right]
\end{align}
is utilized to abstract the signals in $\mathbf{G}$ used for CP that
\begin{align}\label{GaG}
\hspace{-2mm} \mathbf{\Gamma}_\mathrm{c}\mathbf{G}=[\mathbf{0}, \cdots, \mathbf{0}, \mathbf{G}_{N_\mathrm{s}-N_\mathrm{c}+1}^\mathrm{T},\cdots, \mathbf{G}_{N_\mathrm{s}}^\mathrm{T}]^\mathrm{T} \in \mathbb{C}^{N_\mathrm{s}N_\mathrm{t}\times L}
\end{align}
where the zero matrix $\mathbf{0}$ in \eqref{GaG} is with the size of $L \times N_\mathrm{t}$. 
In order to facilitate the solving process of problem (P1) in \eqref{eq:OFDM3}, we can divide the equality power allocation constraint \eqref{eq:OFDM3_2} into two equality power allocation constraints respectively on the symbol signals and the CP signals as
\begin{align}
&\frac{1}{L} \left\| \left(\mathbf{F}_\mathrm{s}^\mathrm{H}\otimes \mathbf{I}_{N_\mathrm{T}} \right)\mathbf{X}_\mathrm{s} \right\|_\mathrm{F}^2
=\frac{1}{L} \left\| \mathbf{G}\right\|_\mathrm{F}^2 = \beta P_\mathrm{t}, \label{PC_S}\\
&\frac{1}{L}\left\|\mathbf{\Gamma}_\mathrm{c}\left(\mathbf{F}_\mathrm{s}^\mathrm{H}\otimes \mathbf{I}_{N_\mathrm{T}} \right)\mathbf{X}_\mathrm{s} \right\|_\mathrm{F}^2
= \frac{1}{L} \left\| \mathbf{\Gamma}_\mathrm{c}\mathbf{G}\right\|_\mathrm{F}^2 = (1-\beta )P_\mathrm{t}, \label{PC_C1}
\end{align}
by introducing a power splitting parameter $\beta \in (0,1)$.
For simplicity, we fix $\beta=\frac{N_\mathrm{s}}{N}$ and then we have $\beta P_\mathrm{t}=\frac{N_\mathrm{s}}{N}P_\mathrm{t}\triangleq P_\mathrm{t}^\mathrm{s}$ and $(1-\beta)P_\mathrm{t}=\frac{N_\mathrm{c}}{N}P_\mathrm{t}\triangleq P_\mathrm{t}^\mathrm{c}$, which is reasonable on the basis of average power allocation on data transmissions and will be used in the rest of this paper.\footnote{The technique of CP is beneficial to avoid ISI cased  by frequency-selective fading but at the cost of consuming extra energy and thus degrade the energy efficiency. The effects of adding CP on the degradation of energy efficiency can be eliminated by enlarging $N_\mathrm{s}$ where $\beta=\frac{N_\mathrm{s}}{N}$ can approach to 1.}


In order to further simplify problem (P1), we denote $\mathbf{D}=\mathbf{F}_\mathrm{s}^\mathrm{H}\otimes \mathbf{I}_{N_\mathrm{t}} \in \mathbb{C}^{N_\mathrm{s}N_\mathrm{t}\times N_\mathrm{s}N_\mathrm{t}}$ which is a full-rank matrix 
with
\begin{align}
\mathbf{D}^{-1}=(\mathbf{F}_\mathrm{s}^\mathrm{H})^{-1}\otimes \mathbf{I}_{N_\mathrm{t}}^{-1}=\mathbf{F}_\mathrm{s}\otimes \mathbf{I}_{N_\mathrm{t}}=\mathbf{D}^\mathrm{H}.
\end{align}
Considering  $\mathbf{G}=(\mathbf{F}_\mathrm{s}^\mathrm{H}\otimes \mathbf{I}_{N_\mathrm{t}})\mathbf{X}_\mathrm{s}=\mathbf{D}\mathbf{X}_\mathrm{s}$, the optimization variable $\mathbf{X}_\mathrm{s}$ can be re-expressed as $\mathbf{X}_\mathrm{s}=\mathbf{D}^{-1}\mathbf{G}$. Let $\mathbf{H}_\mathrm{D}=\mathbf{H}_\mathrm{s}\mathbf{D}^{-1}  \in \mathbb{C}^{N_\mathrm{s}K\times N_\mathrm{s}N_\mathrm{t}}$, then problem (P1) in \eqref{eq:OFDM3} with the newly introduced power allocation constraints \eqref{PC_S} and \eqref{PC_C1} can be further  rewritten as
\begin{subeqnarray}\label{eq:OFDM04}
\hspace{-8mm}({\rm P2}): \underset{\mathbf{G}}{\min} &&\hspace{-6mm}  \frac{\rho}{\left\|\mathbf{S}\right\|_\mathrm{F}^2}
\big\| \mathbf{H}_\mathrm{D}\mathbf{G}-\mathbf{S} \big\|_\mathrm{F}^2+
\frac{1-\rho}{\left\|\mathbf{G}_{0} \right\|_\mathrm{F}^2}  \big\| \mathbf{G}-\mathbf{G}_{0} \big\|_\mathrm{F}^2 \slabel{eq:OFDM04_0} \\
\mathrm{s.t.}
&&\hspace{-6mm} \big|\left[\mathbf{G}\right]_{i,l}\big|^2 \leq \frac{\varepsilon P_\mathrm{t}}{NN_\mathrm{\mathrm{t}}}, ~\forall i\in\mathcal{N}_\mathrm{st}, ~\forall l\in\mathcal{L},  \\
&&\hspace{-6mm}  \left\| \mathbf{G} \right\|_\mathrm{F}^2=LP_\mathrm{t}^\mathrm{s}, \slabel{eq:OFDM04_2}\\
&&\hspace{-6mm}  \left\|\mathbf{\Gamma}_\mathrm{c} \mathbf{G} \right\|_\mathrm{F}^2=LP_\mathrm{t}^\mathrm{c}, \slabel{eq:OFDM04_3}
\end{subeqnarray}
which can be solved by directly optimizing the DFRC MIMO-OFDM waveform after IDFT operations, i.e.,  $\mathbf{G}\triangleq(\mathbf{F}_\mathrm{s}^\mathrm{H}\otimes \mathbf{I}_{N_\mathrm{t}})\mathbf{X}_\mathrm{s}$, and will be addressed in the next sub-section.




%

\subsection{MIMO-OFDM DFRC Waveform Design}\label{sec:Waveform_Design}
It is easy to verify that the objective function of problem (P2) in \eqref{eq:OFDM04} can be equivalently re-expressed in the form of
\begin{align}\label{Obj_AB}
\big\|\mathbf{A}\mathbf{G}-\mathbf{B}\big\|_\mathrm{F}^2
\end{align}
where
\begin{align}
&\hspace{-2mm}\mathbf{A}=\left[\frac{\sqrt{\rho}}{\left\|\mathbf{S}\right\|_\mathrm{F}}\mathbf{H}_\mathrm{D}^\mathrm{T},
\frac{\sqrt{1-\rho}}{\left\|\mathbf{G}_0\right\|_\mathrm{F}}\mathbf{I}_{N_\mathrm{s}N_\mathrm{t}}\right]^\mathrm{T} \hspace{-2mm} \in \mathbb{C}^{(N_\mathrm{s}K+N_\mathrm{s}N_\mathrm{t})\times N_\mathrm{s}N_\mathrm{t}},  \\
&\hspace{-2mm}\mathbf{B}=\left[ \frac{\sqrt{\rho}}{\left\|\mathbf{S}\right\|_\mathrm{F}}\mathbf{S}^\mathrm{T},
\frac{\sqrt{1-\rho}}{\left\|\mathbf{G}_0\right\|_\mathrm{F}}\mathbf{G}_0^\mathrm{T} \right]^\mathrm{T} \hspace{-2mm}\in \mathbb{C}^{(N_\mathrm{s}K+N_\mathrm{s}N_\mathrm{t})\times L}.
\end{align}
To further facilitate the optimization, we then transform problem (P2) in \eqref{eq:OFDM04} with the objective function \eqref{Obj_AB} into an equivalent vector form as below
\begin{subeqnarray}\label{eq:OFDM042}
({\rm P3}): \underset{\mathbf{g}}{\min} &&\hspace{-4mm} \Big\| \widetilde{\mathbf{A}}\mathbf{g}-\mathbf{b} \Big\|^2
\slabel{eq:OFDM042_0} \\
\mathrm{s.t.}
&&\hspace{-4mm} |g_{q}|^2 \leq \frac{\varepsilon P_\mathrm{t}}{NN_\mathrm{\mathrm{t}}}, ~\forall q \in \mathcal{N}_{\mathrm{st}L},  \slabel{eq:OFDM042_1}\\
&&\hspace{-4mm} \left\| \mathbf{g} \right\|^2=LP_\mathrm{t}^\mathrm{s}, \slabel{eq:OFDM042_2} \\
&&\hspace{-4mm}  \Big\|\widetilde{\mathbf{\Gamma}}_\mathrm{c} \mathbf{g} \Big\|^2=LP_\mathrm{t}^\mathrm{c}, \slabel{eq:OFDM042_3}
\end{subeqnarray}
where we have
\begin{align}\label{ABg}
&\widetilde{\mathbf{A}}=\mathbf{I}_L\otimes\mathbf{A} \in \mathbb{C}^{(N_\mathrm{s}K+N_\mathrm{s}N_\mathrm{t})L\times N_\mathrm{s}N_\mathrm{t}L},  \\
&\mathbf{g}=\mathrm{vec}(\mathbf{G}) \in \mathbb{C}^{N_\mathrm{s}N_\mathrm{t}L\times 1}, \\
&\mathbf{b}=\mathrm{vec}(\mathbf{B}) \in \mathbb{C}^{(N_\mathrm{s}K+N_\mathrm{s}N_\mathrm{t})L\times 1},\\
&\widetilde{\mathbf{\Gamma}}_\mathrm{c}=\mathbf{I}_L\otimes\mathbf{\Gamma}_\mathrm{c} \in \mathbb{C}^{ N_\mathrm{s}N_\mathrm{t}L \times N_\mathrm{s}N_\mathrm{t}L },
\end{align}
and $g_{q}$ indicates the $q$-th element of the vector $\mathbf{g}$ for $q \in \mathcal{N}_{\mathrm{st}L}=\{1,\cdots, N_\mathrm{s}N_\mathrm{t}L\}$.

Moreover, the minimization of the objective function in \eqref{eq:OFDM042_0} without any constraints can be equivalently transformed into a problem with an auxiliary parameter $\xi$ as follows
\begin{align}\label{Obj_extend}
\underset{\mathbf{g}}{\min} ~\Big\| \widetilde{\mathbf{A}}\mathbf{g}-\xi\mathbf{b} \Big\|^2, ~~\mathrm{s.t.}~|\xi|^2=1.
\end{align}
Note that if $(\mathbf{g}^\mathrm{o},\xi^\mathrm{o})$ is the optimal solution of problem \eqref{Obj_extend}, then $\mathbf{g}^\mathrm{o}(\xi^\mathrm{o})^*$ is the optimal solution for minimizing the objective function in \eqref{eq:OFDM042_0}. Based on this observation, problem (P3) in \eqref{eq:OFDM042} can be equivalently reformulated as
\begin{subeqnarray}\label{eq:OFDM043}
({\rm P4}): \underset{\mathbf{g},\xi}{\min} &&\hspace{-4mm} [\mathbf{g}^\mathrm{H}, \xi^*]
\left[
 \begin{matrix}
   \widetilde{\mathbf{A}}^\mathrm{H}\widetilde{\mathbf{A}}   &-\widetilde{\mathbf{A}}^\mathrm{H}\mathbf{b} \\
   -\mathbf{b}^\mathrm{H}\widetilde{\mathbf{A}}   &\mathbf{b}^\mathrm{H}\mathbf{b}
  \end{matrix}
\right]
\left[
 \begin{matrix}
   \mathbf{g}  \\
   \xi
  \end{matrix}
\right]
\slabel{eq:OFDM043_0} \\
\mathrm{s.t.}
&&\hspace{-4mm} \mathrm{diag}(\mathbf{gg}^\mathrm{H}) \leq \frac{\varepsilon P_\mathrm{t}}{NN_\mathrm{\mathrm{t}}}\mathbf{1}_{N_\mathrm{s}N_\mathrm{t}L},  \slabel{eq:OFDM043_1}\\
&&\hspace{-4mm} \left\| \mathbf{g} \right\|^2=LP_\mathrm{t}^\mathrm{s}, \slabel{eq:OFDM043_3} \\
&&\hspace{-4mm} \Big\|\widetilde{\mathbf{\Gamma}}_\mathrm{c} \mathbf{g} \Big\|^2=LP_\mathrm{t}^\mathrm{c}, \slabel{eq:OFDM043_4}\\
&&\hspace{-4mm} |\xi|^2=1, \slabel{eq:OFDM043_2}
\end{subeqnarray}
which is a homogeneous quadratically constrained quadratic program (QCQP) and can be solved by SDR. We denote
\begin{align}\label{QCQP_1}
&\hspace{-2mm}\widehat{\mathbf{g}}=[\mathbf{g}^\mathrm{H}, \xi^*]^\mathrm{H}\in \mathbb{C}^{(N_\mathrm{s}N_\mathrm{t}L+1)\times 1},  \\
&\hspace{-2mm}\widehat{\mathbf{G}}=\widehat{\mathbf{g}}\widehat{\mathbf{g}}^\mathrm{H}\in \mathbb{C}^{(N_\mathrm{s}N_\mathrm{t}L+1)\times (N_\mathrm{s}N_\mathrm{t}L+1)},  \\
&\hspace{-2mm}\mathbf{Q}=
\left[
 \begin{matrix}
   \widetilde{\mathbf{A}}^\mathrm{H}\widetilde{\mathbf{A}}   &-\widetilde{\mathbf{A}}^\mathrm{H}\mathbf{b} \\
   -\mathbf{b}^\mathrm{H}\widetilde{\mathbf{A}}              &\mathbf{b}^\mathrm{H}\mathbf{b}
  \end{matrix}
\right]\in \mathbb{C}^{(N_\mathrm{s}N_\mathrm{t}L+1)\times (N_\mathrm{s}N_\mathrm{t}L+1)},  \label{Matrix_Q}\\
&\hspace{-2mm}\widehat{\mathbf{\Gamma}}_\mathrm{c}=
\left[
 \begin{matrix}
   \widetilde{\mathbf{\Gamma}}_\mathrm{c}^\mathrm{H}\widetilde{\mathbf{\Gamma}}_\mathrm{c}   &\hspace{-4mm}\mathbf{0}_{ N_\mathrm{s}N_\mathrm{t}L \times 1}\\
    \mathbf{0}_{1 \times  N_\mathrm{s}N_\mathrm{t}L }                                        &\hspace{-4mm}0
  \end{matrix}
\right]\hspace{-1mm}\in \mathbb{C}^{ (N_\mathrm{s}N_\mathrm{t}L+1) \times (N_\mathrm{s}N_\mathrm{t}L+1)},
\end{align}
problem (P4) in \eqref{eq:OFDM043} can be further rewritten as
\begin{subeqnarray}\label{eq:OFDM005}
\hspace{-4mm}({\rm P5}): \underset{\widehat{\mathbf{G}}}{\min} &&\hspace{-4mm} \mathrm{Tr}(\mathbf{Q}\widehat{\mathbf{G}})  \slabel{eq:OFDM005_0} \\
\mathrm{s.t.}
&&\hspace{-4mm} \widehat{\mathbf{G}}_{q,q}\leq \frac{\varepsilon P_\mathrm{t}}{NN_\mathrm{\mathrm{t}}}, ~ \forall q \in \mathcal{N}_\mathrm{stL},   \slabel{eq:OFDM005_1}\\
&&\hspace{-4mm} \mathrm{Tr}( \widehat{\mathbf{G}} )=LP_\mathrm{t}^\mathrm{s}+1, \slabel{eq:OFDM005_3} \\
&&\hspace{-4mm} \mathrm{Tr}( \widehat{\mathbf{\Gamma}}_\mathrm{c}\widehat{\mathbf{G}} )=LP_\mathrm{t}^\mathrm{c}, \slabel{eq:OFDM005_6} \\
&&\hspace{-4mm} \widehat{\mathbf{G}}_{N_\mathrm{s}N_\mathrm{t}L+1,N_\mathrm{s}N_\mathrm{t}L+1}=1,   \slabel{eq:OFDM005_2} \\
&&\hspace{-4mm} \widehat{\mathbf{G}}\succeq \mathbf{0},  \slabel{eq:OFDM005_7} \\ 
&&\hspace{-4mm} \mathrm{rank}(\widehat{\mathbf{G}})=1, \slabel{eq:OFDM005_4}
\end{subeqnarray}
through optimizing the hermitian semi-definite matrix variable  $\widehat{\mathbf{G}}$.  
By dropping the rank-1 constraint on $\widehat{\mathbf{G}}$ in \eqref{eq:OFDM005_4}, problem (P5) in \eqref{eq:OFDM005} becomes a standard SDP and can be effectively solved  by the classic SDR technique via the existing numerical tools, e.g., CVX. 
\begin{lemma}\label{lemma1}
It can be proved that the rank-1 solution exist in general for the SDR of problem (P5) in \eqref{eq:OFDM005}, indicating that the globally optimal solution of problem (P5) can always be obtained by solving its SDR without considering the rank-1 constraint \eqref{eq:OFDM005_4}, denoted as $\widehat{\mathbf{G}}^\mathrm{o}$. 
\end{lemma}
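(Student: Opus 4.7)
The plan is to prove the rank-1 property through a KKT analysis of the SDP obtained from (P5) after dropping the rank-1 constraint \eqref{eq:OFDM005_4}, exploiting the specific block structure of $\mathbf{Q}$ in \eqref{Matrix_Q}. First I would verify Slater's condition for the relaxed problem: a strictly feasible $\widehat{\mathbf{G}}$ can be built from a diagonal ansatz with its last diagonal entry pinned to $1$ and the remaining diagonal chosen to jointly satisfy \eqref{eq:OFDM005_3} and \eqref{eq:OFDM005_6} while leaving strict slack in \eqref{eq:OFDM005_1}. Strong duality then holds, so the KKT system is necessary and sufficient at any optimum $\widehat{\mathbf{G}}^\mathrm{o}$.

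Next I would form the Lagrangian with nonnegative multipliers $\{\lambda_q\}_{q=1}^{N_\mathrm{s}N_\mathrm{t}L}$ for the inequalities \eqref{eq:OFDM005_1}, real multipliers $\mu_1,\mu_2,\nu$ for the equalities \eqref{eq:OFDM005_3}, \eqref{eq:OFDM005_6}, \eqref{eq:OFDM005_2}, and a PSD dual matrix $\mathbf{Y}\succeq\mathbf{0}$ for \eqref{eq:OFDM005_7}. Stationarity reads $\mathbf{Y}=\mathbf{Q}+\mathrm{diag}(\lambda_1,\dots,\lambda_{N_\mathrm{s}N_\mathrm{t}L},0)+\mu_1\mathbf{I}_{N_\mathrm{s}N_\mathrm{t}L+1}+\mu_2\widehat{\mathbf{\Gamma}}_\mathrm{c}+\nu\,\mathbf{e}\mathbf{e}^\mathrm{H}$, where $\mathbf{e}$ is the last canonical basis vector. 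Complementary slackness gives $\mathbf{Y}\widehat{\mathbf{G}}^\mathrm{o}=\mathbf{0}$, so $\mathrm{range}(\widehat{\mathbf{G}}^\mathrm{o})\subseteq\mathrm{null}(\mathbf{Y})$; because the trace constraint \eqref{eq:OFDM005_3} forces $\widehat{\mathbf{G}}^\mathrm{o}\neq\mathbf{0}$, it suffices to establish $\mathrm{rank}(\mathbf{Y})\geq N_\mathrm{s}N_\mathrm{t}L$.

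To lower-bound $\mathrm{rank}(\mathbf{Y})$, I would partition $\mathbf{Y}$ conformally with the $(\mathbf{g},\xi)$-split of (P4): its top-left $N_\mathrm{s}N_\mathrm{t}L\times N_\mathrm{s}N_\mathrm{t}L$ block is $\mathbf{Y}_{11}=\widetilde{\mathbf{A}}^\mathrm{H}\widetilde{\mathbf{A}}+\mathrm{diag}(\boldsymbol{\lambda})+\mu_1\mathbf{I}_{N_\mathrm{s}N_\mathrm{t}L}+\mu_2\widetilde{\mathbf{\Gamma}}_\mathrm{c}$, and its off-diagonal block is $-\widetilde{\mathbf{A}}^\mathrm{H}\mathbf{b}$. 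The key structural fact is that $\mathbf{A}$ in \eqref{ABg} contains the scaled identity sub-block $\tfrac{\sqrt{1-\rho}}{\|\mathbf{G}_0\|_\mathrm{F}}\mathbf{I}_{N_\mathrm{s}N_\mathrm{t}}$ (for $\rho<1$), so $\widetilde{\mathbf{A}}=\mathbf{I}_L\otimes\mathbf{A}$ has full column rank and $\widetilde{\mathbf{A}}^\mathrm{H}\widetilde{\mathbf{A}}\succ\mathbf{0}$ with minimum eigenvalue at least $(1-\rho)/\|\mathbf{G}_0\|_\mathrm{F}^2$. Since $\mathrm{diag}(\boldsymbol{\lambda})\succeq\mathbf{0}$ and $\mu_2\widetilde{\mathbf{\Gamma}}_\mathrm{c}$ acts only on the CP coordinates, the sole ambiguously signed contribution is $\mu_1\mathbf{I}$; I would argue that any $\mu_1$ driving $\mathbf{Y}_{11}$ to singularity is incompatible with $\mathbf{Y}\succeq\mathbf{0}$ in the presence of the nontrivial off-diagonal block $-\widetilde{\mathbf{A}}^\mathrm{H}\mathbf{b}$, so $\mathbf{Y}_{11}\succ\mathbf{0}$ holds. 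A Schur complement on $\mathbf{Y}$ with $\mathbf{Y}_{11}$ invertible, combined with $\mathbf{Y}\widehat{\mathbf{G}}^\mathrm{o}=\mathbf{0}$ and $\widehat{\mathbf{G}}^\mathrm{o}\neq\mathbf{0}$, then forces the scalar Schur complement $b^\mathrm{H}\mathbf{b}+\mu_1+\nu-\mathbf{b}^\mathrm{H}\widetilde{\mathbf{A}}\mathbf{Y}_{11}^{-1}\widetilde{\mathbf{A}}^\mathrm{H}\mathbf{b}$ to vanish, giving $\mathrm{rank}(\mathbf{Y})=N_\mathrm{s}N_\mathrm{t}L$ exactly.

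The main obstacle is pinning down the sign argument on $\mathbf{Y}_{11}$ rigorously across \emph{all} dual optima; this is where the phrase ``in general'' in Lemma \ref{lemma1} enters, since on a measure-zero set of problem data (for example, many simultaneously active box constraints whose multipliers conspire with $\mu_1$ at a degenerate boundary) the top-left block could lose rank and permit higher-rank optima. Away from that degenerate set, once $\mathrm{rank}(\mathbf{Y})=N_\mathrm{s}N_\mathrm{t}L$ is secured, $\mathrm{null}(\mathbf{Y})$ is one-dimensional, complementary slackness forces $\mathrm{rank}(\widehat{\mathbf{G}}^\mathrm{o})\leq 1$, and the positivity of the trace keeps it at $\mathrm{rank}(\widehat{\mathbf{G}}^\mathrm{o})=1$, completing the proof.
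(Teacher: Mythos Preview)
Your proposal is correct and follows essentially the same KKT-based route as the paper: both establish that the dual slack matrix (your $\mathbf{Y}$, the paper's $\mathbf{\Psi}$) has rank at least $N_\mathrm{s}N_\mathrm{t}L$, then use complementary slackness to force $\mathrm{rank}(\widehat{\mathbf{G}}^\mathrm{o})=1$, with both openly relegating the sign ambiguity on the equality multipliers to the ``in general'' qualifier. The only cosmetic difference is that you partition $\mathbf{Y}$ and invoke a Schur complement on $\mathbf{Y}_{11}$ (leaning on $\widetilde{\mathbf{A}}^\mathrm{H}\widetilde{\mathbf{A}}\succ\mathbf{0}$ for $\rho<1$), whereas the paper argues the whole matrix $\mathbf{\Xi}=\mathbf{Q}+\mathbf{\Pi}^\mathrm{o}+\nu^\mathrm{o}\mathbf{I}+\varrho^\mathrm{o}\widehat{\mathbf{\Gamma}}_\mathrm{c}\succ\mathbf{0}$ by contradiction and then applies rank subadditivity against the rank-one perturbation $\varphi^\mathrm{o}\mathbf{\Pi}_{N_\mathrm{s}N_\mathrm{t}L+1}$.
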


\begin{proof}
See Appendix~\ref{sec:proof_lem1}.
\end{proof}


Once we obtain the optimal solution of $\mathbf{G}^\mathrm{o}$ based on $\widehat{\mathbf{G}}^\mathrm{o}$, then the optimal precoded DFRC MIMO-OFDM waveform on the symbol subcarriers can be expressed as $\mathbf{X}_\mathrm{s}^\mathrm{o}=\mathbf{D}_\mathrm{s}^{-1}\mathbf{G}^\mathrm{o}$.

The previously proposed SDR method for solving problem (P5) in \eqref{eq:OFDM005} can obtain the global optimal solution of the original problem (P2) in \eqref{eq:OFDM04}, but its computation complexity is quite high. 
We can observe that the number of optimization variables for problem (P5)  is $(N_\mathrm{s}N_\mathrm{t}L+1)^2$ which is nearly proportional to the square of $N_\mathrm{s}N_\mathrm{t}L$, and thus high computational complexity may hinder its use in practical systems, especially considering the wideband scenario with longer coherent frame, i.e., with larger $L$. Next, we propose a method with much lower complexity by dividing the original  problem (P2)  into $L$ subproblems corresponding to each  frame/snapshot of the DFRC MIMO-OFDM waveform.

\vspace{-4mm}
\subsection{Low-Complexity Approach}\label{sec:Waveform_Design_low}
In order to solve problem (P2) in \eqref{eq:OFDM04} with low complexity, we further transform problem (P2)  into the following form
\begin{subeqnarray}\label{eq:OFDM07}
\hspace{-4mm}({\rm P6}): \underset{  \{\mathbf{g}_l\}_{l\in \mathcal{L}} }{\min} &&\hspace{-4mm} \sum_{l=1}^L \bigg(\frac{\rho}{\left\|\mathbf{S}\right\|_\mathrm{F}^2}
\big\| \mathbf{H}_{\mathrm{D}}\mathbf{g}_l-\mathbf{s}_l \big\|^2+ \nonumber \\
&&~~~~~~~\hspace{-4mm}\frac{1-\rho}{\left\|\mathbf{G}_{0}\right\|_\mathrm{F}^2} \big\| \mathbf{g}_l-\mathbf{g}_{0,l} \big\|^2 \bigg) \slabel{eq:OFDM07_0} \\
\mathrm{s.t.} ~
&&\hspace{-4mm} |g_{l,i}|^2\leq \frac{\varepsilon P_\mathrm{t}}{NN_\mathrm{\mathrm{t}}}, ~\forall i\in\mathcal{N}_\mathrm{st}, ~\forall l\in\mathcal{L}, \slabel{eq:OFDM07_1}\\
&&\hspace{-4mm} \left\| \mathbf{g}_l \right\|^2=P_\mathrm{t}^\mathrm{s}, ~\forall l\in\mathcal{L},  \slabel{eq:OFDM07_2} \\
&&\hspace{-4mm}  \left\|\mathbf{\Gamma}_\mathrm{c} \mathbf{g}_l \right\|^2=P_\mathrm{t}^\mathrm{c}, ~\forall l\in\mathcal{L}, \slabel{eq:OFDM07_3}
\end{subeqnarray}
where $\mathbf{g}_l \in \mathbb{C}^{N_\mathrm{s}N_\mathrm{t} \times 1}$, $\mathbf{s}_l \in \mathbb{C}^{N_\mathrm{s}K \times 1}$ and $\mathbf{g}_{0,l} \in \mathbb{C}^{N_\mathrm{s}N_\mathrm{t} \times 1}$ correspond to the $l$-th column of $\mathbf{G}$, $\mathbf{S}$ and $\mathbf{G}_0$, respectively. Also,  $g_{l,i}$ indicates the $i$-th element of $\mathbf{g}_l$.
In problem (P6),  the original sum power allocation constraints in \eqref{eq:OFDM04_2} and \eqref{eq:OFDM04_3}  of problem (P2) is respectively further relaxed into $L$ individual power allocation constraints  as in \eqref{eq:OFDM07_2} and \eqref{eq:OFDM07_3}  for $l \in \mathcal{L}$, which is optimal in the  average basis and easy to operate in practical communication systems.
Note that problem (P6) can be parallelly solved by addressing $L$ sub-problems as shown in the following $l$-th sub-problem (P6.1)
\begin{subeqnarray}\label{eq:OFDM08}
\hspace{-7mm}({\rm P6.1}): \underset{  \mathbf{g}_l }{\min} &&\hspace{-7mm}  \frac{\rho}{ \left\| \mathbf{S}\right\|_\mathrm{F}^2 }
 \left\| \mathbf{H}_{\mathrm{D}}\mathbf{g}_l-\mathbf{s}_l \right\|^2+ \frac{1-\rho}{ \left\|\mathbf{G}_{0}\right\|_\mathrm{F}^2} \left\| \mathbf{g}_l-\mathbf{g}_{0,l} \right\|^2  \slabel{eq:OFDM08_0} \\
\mathrm{s.t.} ~
&&\hspace{-7mm} |g_{l,i}|^2 \leq \frac{\varepsilon P_\mathrm{t}}{NN_\mathrm{\mathrm{t}}},  ~\forall i\in\mathcal{N}_\mathrm{st},  \slabel{eq:OFDM08_1}\\
&&\hspace{-7mm} \left\|\mathbf{g}_l \right\|^2=P_\mathrm{t}^\mathrm{s},   \slabel{eq:OFDM08_2}\\
&&\hspace{-7mm} \left\|\mathbf{\Gamma}_\mathrm{c} \mathbf{g}_l \right\|^2=P_\mathrm{t}^\mathrm{c},  \slabel{eq:OFDM08_3}
\end{subeqnarray}
%
%
which can be finally transformed into the following form
\begin{subeqnarray}\label{eq:OFDM008}
\hspace{-6mm}({\rm P6.2}): \underset{\widehat{\mathbf{G}}_l}{\max} &&\hspace{-4mm} \mathrm{Tr}(\mathbf{Q}_l\widehat{\mathbf{G}}_l)  \slabel{eq:OFDM008_0} \\
\mathrm{s.t.} ~
&&\hspace{-4mm} \widehat{\mathbf{G}}_{l,i,i}\leq \frac{\varepsilon P_\mathrm{t}}{NN_\mathrm{\mathrm{t}}},  ~\forall i\in\mathcal{N}_\mathrm{st},    \slabel{eq:OFDM008_1}\\
&&\hspace{-4mm} \mathrm{Tr}( \widehat{\mathbf{G}}_l )=P_\mathrm{t}^\mathrm{s}+1, \slabel{eq:OFDM008_3} \\
&&\hspace{-4mm} \mathrm{Tr}( \dot{\mathbf{\Gamma}}_\mathrm{c} \widehat{\mathbf{G}}_l )=P_\mathrm{t}^\mathrm{c}, \slabel{eq:OFDM008_6} \\
&&\hspace{-4mm} \widehat{\mathbf{G}}_{l,N_\mathrm{s}N_\mathrm{t}+1,N_\mathrm{s}N_\mathrm{t}+1}=1,   \slabel{eq:OFDM008_2}\\
&&\hspace{-4mm} \widehat{\mathbf{G}}_l\succeq 0,  \slabel{eq:OFDM008_4}\\
&&\hspace{-4mm} \mathrm{rank}(\widehat{\mathbf{G}}_l)=1, \slabel{eq:OFDM008_5}
\end{subeqnarray}
which is similar to problem (P5) in \eqref{eq:OFDM005} where
\begin{align}\label{QCQP_2}
&\widehat{\mathbf{G}}_l=\widehat{\mathbf{g}}_l\widehat{\mathbf{g}}_l^\mathrm{H}\in \mathbb{C}^{(N_\mathrm{s}N_\mathrm{t}+1)\times (N_\mathrm{s}N_\mathrm{t}+1)},  \\
&\mathbf{Q}_l=
\left[
 \begin{matrix}
   \mathbf{A}^\mathrm{H}\mathbf{A}   &-\mathbf{A}^\mathrm{H}\mathbf{b}_l \\
   -\mathbf{b}_l^\mathrm{H}\mathbf{A}              &\mathbf{b}_l^\mathrm{H}\mathbf{b}_l
  \end{matrix}
\right]\in \mathbb{C}^{(N_\mathrm{s}N_\mathrm{t}+1)\times (N_\mathrm{s}N_\mathrm{t}+1)}  \\
&\dot{\mathbf{\Gamma}}_\mathrm{c}=
\left[
 \begin{matrix}
   \mathbf{\Gamma}_\mathrm{c}^\mathrm{H}\mathbf{\Gamma}_\mathrm{c}   &\hspace{-4mm}\mathbf{0}_{ N_\mathrm{s}N_\mathrm{t} \times 1}\\
    \mathbf{0}_{1 \times  N_\mathrm{s}N_\mathrm{t} }                                        &\hspace{-4mm}0
  \end{matrix}
\right]\hspace{-1mm}\in \mathbb{C}^{ (N_\mathrm{s}N_\mathrm{t}+1) \times (N_\mathrm{s}N_\mathrm{t}+1)},
\end{align}
with $\widehat{\mathbf{g}}_l=[\mathbf{g}_l^\mathrm{H}, \xi_l^*]\in \mathbb{C}^{(N_\mathrm{s}N_\mathrm{t}+1)\times 1}$ and $\mathbf{b}_l \in \mathbb{C}^{(N_\mathrm{s}K+N_\mathrm{s}N_\mathrm{t})\times 1}$ being the $l$-th column of $\mathbf{B}$.
The SDR of problem (P6.2) in \eqref{eq:OFDM008} can also be solved with CVX by dropping the rank-1 constraint \eqref{eq:OFDM008_5}, and it can be verified that the rank-1 solution exist in general via the same method of Lemma \ref{lemma1}.

The number of optimization variables for solving problem (P6.2) in \eqref{eq:OFDM008} is $(N_\mathrm{s}N_\mathrm{t}+1)^2$, and thus the computational complexity/running time can be significantly reduced comparing with solving problem (P5) in \eqref{eq:OFDM005}.
In the simulation results, we leverage the low-complexity method of solving sub-problems proposed in this section due to the fact that it is easy to exceed the array size of MATLAB operation when solving the original problem (P5) in \eqref{eq:OFDM005} with the SDR algorithm given in Section \ref{sec:Waveform_Design}, especially with larger values of $N_\mathrm{s}$, $N_\mathrm{t}$ and $L$.

\subsection{Desired Benchmark of Radar Waveform with Directional Beampattern Design}\label{sec:Waveform_DBD}
In this subsection, we provide a desired benchmark for the radar detection waveform of the MIMO-OFDM system by utilizing the technique of directional beampattern design, denoted as $\mathbf{G}_0^\mathrm{d}=(\mathbf{F}_\mathrm{s}^\mathrm{H}\otimes \mathbf{I}_{N_\mathrm{t}})\mathbf{X}_{\mathrm{s},0}^\mathrm{d}$, where $\mathbf{X}_{\mathrm{s},0}^\mathrm{d}$ is the corresponding precoded waveform.
Assuming  $\mathbf{R}_\mathrm{d}  \in \mathbb{C}^{N_\mathrm{t}\times N_\mathrm{t}}$ is a Hermitian positive semi-definite covariance matrix corresponding to a well designed MIMO radar beampattern on a single carrier,  the radar detection waveform of the considered MIMO-OFDM system via the directional beampattern design can be obtained  by solving the following MUI minimization problem
\begin{subeqnarray}\label{eq:DBD_OFDM}
 \underset{\mathbf{G}}{\min} &&\hspace{-4mm} \left\| \mathbf{H}_\mathrm{D}\mathbf{G}-\mathbf{S} \right\|_\mathrm{F}^2  \slabel{eq:DBD_OFDM_0} \\
\mathrm{s.t.}
&&\hspace{-4mm}  \frac{1}{L}\mathbf{GG}^\mathrm{H}= \mathbf{I}_{N_\mathrm{s}}\otimes \mathbf{R}_\mathrm{d}. \slabel{eq:DBD_OFDM_2}
\end{subeqnarray}

Considering the Cholesky decomposition on $\mathbf{R}_\mathrm{d}$, we have
\begin{align}
\mathbf{I}_{N_\mathrm{s}}\otimes\mathbf{R}_\mathrm{d}=\mathbf{\Phi\Phi}^\mathrm{H},
\end{align}
where $\mathbf{\Phi} \in \mathbb{C}^{N_\mathrm{s}N_\mathrm{t}\times N_\mathrm{s}N_\mathrm{t}}$ is a lower triangular matrix. Without loss of generality, we assume $\mathbf{R}_\mathrm{d}$ is positive-definite such that  $\mathbf{I}_{N_\mathrm{s}}\otimes \mathbf{R}_\mathrm{d}$ is positive-definite as well  which can guarantee that $\mathbf{\Phi}$ is invertible. Hence, the constraint in \eqref{eq:DBD_OFDM_2} can be equivalently re-expressed as
\begin{align}
\frac{1}{L}\mathbf{\Phi}^{-1}\mathbf{G}(\mathbf{\Phi}^{-1}\mathbf{G})^\mathrm{H}= \mathbf{I}_{N_\mathrm{s}N_\mathrm{t}}.
\end{align}
Let us denote $\mathbf{G}_\Phi=\sqrt{\frac{1}{L}}\mathbf{\Phi}^{-1}\mathbf{G}$, then problem \eqref{eq:DBD_OFDM} can be reformulated as
\begin{subeqnarray}\label{eq:DBD_OFDM2}
 \underset{\mathbf{G}_\Phi}{\min} &&\hspace{-4mm}  \left\|\sqrt{L}\mathbf{H}_\mathrm{D}\mathbf{\Phi}\mathbf{G}_\Phi-\mathbf{S} \right\|_\mathrm{F}^2  \slabel{eq:DBD_OFDM2_0} \\
\mathrm{s.t.}
&&\hspace{-4mm}  \mathbf{G}_\Phi \mathbf{G}_\Phi^\mathrm{H}= \mathbf{I}_{N_\mathrm{s}N_\mathrm{t}}, \slabel{eq:DBD_OFDM2_2}
\end{subeqnarray}
which has been proven as the Orthogonal Procrustes problem (OPP) \cite{PHD2008_T.Viklands2006Algorithms}, and a simple closed-form global optimal solution based on the Singular Value Decomposition (SVD) is given as
\begin{align}
\mathbf{G}_\Phi=\mathbf{U}\mathbf{I}_{N_\mathrm{s}N_\mathrm{t} \times L}\mathbf{V}^\mathrm{H},
\end{align}
where $\mathbf{U}\mathbf{\Sigma}\mathbf{V}^\mathrm{H}= \mathbf{\Phi}^\mathrm{H}\mathbf{H}_\mathrm{D}^\mathrm{H}\mathbf{S}$ is the SVD of $\mathbf{\Phi}^\mathrm{H}\mathbf{H}_\mathrm{D}^\mathrm{H}\mathbf{S}$. Hence, the optimal solution of the original problem \eqref{eq:DBD_OFDM} with directional radar beampattern design can be expressed as
\begin{align}
\mathbf{G}_0^\mathrm{d}=\sqrt{L}\mathbf{\Phi}\mathbf{G}_\Phi=\sqrt{L}\mathbf{\Phi}\mathbf{U}\mathbf{I}_{N_\mathrm{s}N_\mathrm{t} \times L}\mathbf{V}^\mathrm{H},
\end{align}
which can be used as a desired benchmark of radar waveform, i.e., $\mathbf{G}_0$, used in  problem (P2) in \eqref{eq:OFDM04}.


\section{Low-PAPR DFRC MIMO-OFDM Waveform Design with Oversampling}\label{sec:Problem_CPOS}
OFDM introduces large amplitude variations in time, which may result in significant signal distortion in the presence of non-linear amplifiers. In practical OFDM systems utilizing low-cost non-linear amplifiers, the technique of oversampling is usually required for digital pre-distortion to avoid serious distortion of the time-domain signals \cite{2003TCOM_M.Sharif_OnthePAPR}. Meanwhile, we can obtain a more accurate PAPR measurement through the oversampled  signals compared with the scenario leveraging Nyquist-rate sampling in Section \ref{sec:Problem_CP}  (oversampling rate equals to 1). 
Based on the results in \cite{B2010_Y.Cho_MIMOOFDM,WC2005_S.Han_AnOverview},  PAPR levels can be accurately measured if the discrete-time signals are $\Upsilon$-times interpolated (oversampled) with $\Upsilon\geq 4$.
Hence, in this section, we further address the low-PAPR DFRC MIMO-OFDM waveform design in the practical  scenario  with oversampling.

\subsection{Problem Formulation}\label{sec:PF_CP_OS1}
Considering an oversampling rate $\Upsilon\geq 4$, the corresponding IDFT of each transmit antenna should with  $\Upsilon N_\mathrm{s}$  input and output points. We further denote $\mathbf{F}_\mathrm{os} \in \mathbb{C}^{\Upsilon N_\mathrm{s}\times \Upsilon N_\mathrm{s}}$ as the normalized oversampling DFT matrix where $F_{i,m}^\mathrm{os}=\frac{1}{\sqrt{ N_\mathrm{s}}}e^{-j\frac{2\pi}{\Upsilon N_\mathrm{s}}(i-1)(m-1)}$ is the ($i,m$)-th element with index $i,m \in \mathcal{N}_\mathrm{os}=\{1,2,\cdots, \Upsilon N_\mathrm{s}\}$. In addition, the precoded symbol matrix should be $\Upsilon$-times interpolated with 0 as
\begin{align}\label{Xos}
\mathbf{X}_\mathrm{os}=\Big[&\mathbf{X}_1^\mathrm{T},\cdots, \mathbf{X}_{N_\mathrm{s}/2}^\mathrm{T},~
\underbrace{\mathbf{0}, ~~\cdots, ~~\mathbf{0}}_{\Upsilon N_\mathrm{s}-1 ~ \mathbf{0}_{L\times N_\mathrm{t}}  },~ \nonumber\\
&~~~~~~~~~\mathbf{X}_{N_\mathrm{s}/2+1}^\mathrm{T}, \cdots, \mathbf{X}_{N_\mathrm{s}}^\mathrm{T}\Big]^\mathrm{T} \in \mathbb{C}^{\Upsilon N_\mathrm{s}N_\mathrm{t}\times L},
\end{align}
considering even number of subcarriers $N_\mathrm{s}$, then the oversampled IDFT output can be written as $(\mathbf{F}_\mathrm{os}^\mathrm{H}\otimes \mathbf{I}_{N_\mathrm{t}})\mathbf{X}_\mathrm{os} \in \mathbb{C}^{\Upsilon N_\mathrm{s}N_\mathrm{t}\times L}$. Hence, the PAPR constraint of the oversampled discrete-time signal for the DFRC MIMO-OFDM waveform  after IDFT can be expressed as
\begin{align}\label{PAPR2}
\mathrm{PAPR}(\mathbf{X}_\mathrm{os})&=\frac{\underset{\gamma,l}{\max} ~\left|\big[(\mathbf{F}_\mathrm{os}^\mathrm{H}\otimes \mathbf{I}_{N_\mathrm{t}})\mathbf{X}_\mathrm{os}\big]_{\gamma,l}\right|^2}
{\frac{1}{\Upsilon NN_\mathrm{t}L} \left\| ( \dot{\mathbf{F}}_\mathrm{os}^\mathrm{H}\otimes \mathbf{I}_{N_\mathrm{t}})\dot{\mathbf{X}}_\mathrm{os} \right\|_\mathrm{F}^2} \nonumber\\
&\overset{(a)}{=}\frac{\underset{\gamma,l}{\max} ~\left|\big[(\mathbf{F}_\mathrm{os}^\mathrm{H}\otimes \mathbf{I}_{N_\mathrm{t}})\mathbf{X}_\mathrm{os}\big]_{\gamma,l}\right|^2}
{\frac{1}{ NN_\mathrm{t}L}\left\|( \dot{\mathbf{F}}^\mathrm{H}\otimes \mathbf{I}_{N_\mathrm{t}})\dot{\mathbf{X}} \right\|_\mathrm{F}^2} \nonumber\\
&\overset{(b)}{=}\frac{\underset{\gamma,l}{\max} ~\left|\big[(\mathbf{F}_\mathrm{os}^\mathrm{H}\otimes \mathbf{I}_{N_\mathrm{t}})\mathbf{X}_\mathrm{os}\big]_{\gamma,l}\right|^2 }
{\frac{1}{ NN_\mathrm{t}} P_\mathrm{t} }\leq\varepsilon, 
\end{align}
where $\gamma \in N_\mathrm{\gamma st}=\{1,\cdots, \Upsilon N_\mathrm{s}N_\mathrm{t}\}$, $l\in\mathcal{L}$ and
\begin{align}
&\dot{\mathbf{F}}_\mathrm{os}=[\mathbf{F}_\mathrm{os}^\mathrm{c}, \mathbf{F}_\mathrm{os}]\in \mathbb{C}^{\Upsilon N_\mathrm{s}\times (\Upsilon N_\mathrm{s}+N_\mathrm{c}) },\\
&\dot{\mathbf{X}}_\mathrm{os}=[(\mathbf{X}_\mathrm{os}^\mathrm{c})^\mathrm{T}, \mathbf{X}_\mathrm{os}^\mathrm{T}]\in \mathbb{C}^{ (\Upsilon N_\mathrm{s}+N_\mathrm{c})N_\mathrm{t} \times L } \label{Xos_CP}
\end{align}
with $\mathbf{F}_\mathrm{os}^\mathrm{c} \in \mathbb{C}^{\Upsilon N_\mathrm{s}\times N_\mathrm{c}}$ being formulated by the last $N_\mathrm{c}$ columns of $\mathbf{F}_\mathrm{os}$ while $\mathbf{X}_\mathrm{os}^\mathrm{c}  \in \mathbb{C}^{ N_\mathrm{c}\times N_\mathrm{t}}$ being formulated by the last $N_\mathrm{c}N_\mathrm{t}$ rows of $\mathbf{X}_\mathrm{os}$, similar to the definitions of $\mathbf{F}_\mathrm{c}$ in \eqref{Fc} and $\mathbf{X}_\mathrm{c}$ in \eqref{Xc}, respectively. It is easy to observe that $\mathbf{X}_\mathrm{os}^\mathrm{c}=\mathbf{X}_\mathrm{c}$ in the practical scenario where $N_\mathrm{c}\leq N_\mathrm{s}/2$, which is considered in this paper.\footnote{In order to improve the energy efficiency of practical OFDM communication systems,  the number of symbols is usually set to be much larger than  the length of CP, i.e., $N_\mathrm{s}\gg N_\mathrm{c}$. In the scenario considered in this paper, we assume that $N_\mathrm{s}/2\geq N_\mathrm{c}$.} 
In \eqref{PAPR2}, (a) and (b) are based on the fact that
\begin{align}
\frac{1}{\Upsilon }\big\| ( \dot{\mathbf{F}}_\mathrm{os}^\mathrm{H}\otimes \mathbf{I}_{N_\mathrm{t}})\dot{\mathbf{X}}_\mathrm{os} \big\|_\mathrm{F}^2=
\big\|( \dot{\mathbf{F}}^\mathrm{H}\otimes \mathbf{I}_{N_\mathrm{t}})\dot{\mathbf{X}} \big\|_\mathrm{F}^2=LP_\mathrm{t}.
\end{align}
Hence, the PAPR constraint in \eqref{PAPR2} is  equivalent to the set of PAPR constraints given below:
\begin{align}\label{PAPR_acc1}
\Big|\big[(\mathbf{F}_\mathrm{os}^\mathrm{H}\otimes \mathbf{I}_{N_\mathrm{t}})\mathbf{X}_\mathrm{os}\big]_{\gamma,l}\Big|^2 \leq \frac{\varepsilon P_\mathrm{t}}{ NN_\mathrm{\mathrm{t}}}, ~\forall \gamma \in N_\mathrm{\gamma st}, ~\forall l\in\mathcal{L}.
\end{align}

It is easy to note the effective elements of the oversampled matrix $\mathbf{X}_\mathrm{os} \in \mathbb{C}^{\Upsilon N_\mathrm{s}N_\mathrm{t}\times L}$  in \eqref{Xos} are exactly the elements of the original precoded symbol matrix $\mathbf{X}_\mathrm{s} \in \mathbb{C}^{N_\mathrm{s}N_\mathrm{t}\times L}$. For utilizing the accurate PAPR constraint \eqref{PAPR_acc1} and facilitating the formulation of the optimization problem for the effective matrix $\mathbf{X}_\mathrm{s}$, one challenging lies in transforming $(\mathbf{F}_\mathrm{os}^\mathrm{H}\otimes \mathbf{I}_{N_\mathrm{t}})\mathbf{X}_\mathrm{os}$ in \eqref{PAPR_acc1} into a function of $\mathbf{X}_\mathrm{s}$, which is also a key step for simplifying the problem solving process and reducing the computational complexity.
To this end, we further denote an equivalent oversampling DFT matrix $\widetilde{\mathbf{F}}_\mathrm{os} \in \mathbb{C}^{\Upsilon N_\mathrm{s}\times N_\mathrm{s}}$ as
\begin{align}\label{DFT_E1}
\hspace{-2mm}\widetilde{\mathbf{F}}_\mathrm{os}&=\left\{
\begin{aligned}
&\frac{1}{\sqrt{ N_\mathrm{s}}}e^{-j\frac{2\pi}{\Upsilon N_\mathrm{s}}(n-1)(m-1)}, \\
&~~~~~~n=1,\cdots, \frac{N_\mathrm{s}}{2}, ~m=1, \cdots, \Upsilon N_\mathrm{s}, \\
&\frac{1}{\sqrt{ N_\mathrm{s}}}e^{-j\frac{2\pi}{\Upsilon N_\mathrm{s}}(\Upsilon N_\mathrm{s}-N_\mathrm{s}+(n-1) )(m-1)}, \\
&~~~~~~ n=\frac{N_\mathrm{s}}{2}+1\cdots, N_\mathrm{s}, ~m=1, \cdots, \Upsilon N_\mathrm{s}, \\
\end{aligned}\right.
\end{align}
and it is easy to prove that
\begin{align}
&(\mathbf{F}_\mathrm{os}^\mathrm{H}\otimes \mathbf{I}_{N_\mathrm{t}})\mathbf{X}_\mathrm{os}=(\widetilde{\mathbf{F}}_\mathrm{os}^\mathrm{H}\otimes \mathbf{I}_{N_\mathrm{t}})\mathbf{X}_\mathrm{s} \in \mathbb{C}^{\Upsilon N_\mathrm{s}N_\mathrm{t}\times L}.
\end{align}
Then the PAPR constraints in \eqref{PAPR_acc1} can be equivalently transformed into
\begin{align}\label{PAPR_acc2}
\Big|\big[(\widetilde{\mathbf{F}}_\mathrm{os}^\mathrm{H}\otimes \mathbf{I}_{N_\mathrm{t}})\mathbf{X}_\mathrm{s}\big]_{\gamma,l}\Big|^2 \leq \frac{\varepsilon P_\mathrm{t}}{ NN_\mathrm{\mathrm{t}}}, ~\forall \gamma \in N_\mathrm{\gamma st}, ~\forall l\in\mathcal{L}.
\end{align}
Considering the equivalent IDFT operation $\widetilde{\mathbf{F}}_\mathrm{os}^\mathrm{H}\otimes \mathbf{I}_{N_\mathrm{t}} \in \mathbb{C}^{\Upsilon N_\mathrm{s}N_\mathrm{t}\times N_\mathrm{s}N_\mathrm{t} }$ and DFT operation $\widetilde{\mathbf{F}}_\mathrm{os}\otimes \mathbf{I}_{N_\mathrm{t}} \in \mathbb{C}^{N_\mathrm{s}N_\mathrm{t}\times\Upsilon N_\mathrm{s}N_\mathrm{t} }$ over the MIMO-OFDM system, we further have
\begin{align}
&(\widetilde{\mathbf{F}}_\mathrm{os}\otimes \mathbf{I}_{N_\mathrm{t}})(\widetilde{\mathbf{F}}_\mathrm{os}^\mathrm{H}\otimes \mathbf{I}_{N_\mathrm{t}})= \mathbf{I}_{N_\mathrm{s}N_\mathrm{t}} \Leftrightarrow \nonumber\\
&(\widetilde{\mathbf{F}}_\mathrm{os}^\mathrm{H}\otimes \mathbf{I}_{N_\mathrm{t}})^{\dag}=\widetilde{\mathbf{F}}_\mathrm{os}\otimes \mathbf{I}_{N_\mathrm{t}}=(\widetilde{\mathbf{F}}_\mathrm{os}^\mathrm{H}\otimes \mathbf{I}_{N_\mathrm{T}})^{\mathrm{H}},
\end{align}
where $(\widetilde{\mathbf{F}}_\mathrm{os}^\mathrm{H}\otimes \mathbf{I}_{N_\mathrm{t}})^{\dag}$ represents the Moore-Penrose pseudo-inverse of $\widetilde{\mathbf{F}}_\mathrm{os}^\mathrm{H}\otimes \mathbf{I}_{N_\mathrm{t}}$.
Hence, we can formulate the optimization problem similar as problem (P2) in \eqref{eq:OFDM04} by optimizing the DFRC MIMO-OFDM waveform matrix after IDFT operation, i.e., $\mathbf{G}=\mathbf{D}\mathbf{X}_\mathrm{s}$, with the consideration of the accurate PAPR constraints in  \eqref{PAPR_acc2}, which is expressed as
\begin{subeqnarray}\label{eq:OFDM048}
\hspace{-8mm}({\rm \widehat{P}1}): \underset{\mathbf{G}}{\min} &&\hspace{-6mm} \frac{\rho}{\left\|\mathbf{S}\right\|_\mathrm{F}^2}
\left\| \mathbf{H}_\mathrm{D}\mathbf{G}-\mathbf{S} \right\|_\mathrm{F}^2
+\frac{1-\rho}{\left\|\mathbf{G}_{0} \right\|_\mathrm{F}^2} \left\| \mathbf{G}-\mathbf{G}_0 \right\|_\mathrm{F}^2 \slabel{eq:OFDM048_0} \\
\mathrm{s.t.}
&&\hspace{-6mm} \big|[\mathbf{\Theta}\mathbf{G}]_{\gamma,l}\big|^2 \leq \frac{\varepsilon P_\mathrm{t}}{NN_\mathrm{\mathrm{t}}}, ~\forall \gamma \in N_\mathrm{\gamma st}, ~\forall l\in\mathcal{L},  \slabel{eq:OFDM048_1}\\
&&\hspace{-6mm} \parallel\hspace{-1mm} \mathbf{G} \hspace{-1mm}\parallel\hspace{-1mm}{_\mathrm{F}^2}=LP_\mathrm{t}^\mathrm{s}, \slabel{eq:OFDM048_2} \\
&&\hspace{-6mm} \left\|\mathbf{\Gamma}_\mathrm{c} \mathbf{G} \right\|_\mathrm{F}^2=LP_\mathrm{t}^\mathrm{c}, \slabel{eq:OFDM048_3}
\end{subeqnarray}
where $\mathbf{\Theta}=\mathbf{D}_\mathrm{os}\mathbf{D}^{-1} \in \mathbb{C}^{\Upsilon N_\mathrm{s}N_\mathrm{t}\times N_\mathrm{s}N_\mathrm{t} }$ with $\mathbf{D}_\mathrm{os}=\widetilde{\mathbf{F}}_\mathrm{os}^\mathrm{H}\otimes \mathbf{I}_{N_\mathrm{t}}$.
%
%
The PAPR constraints in \eqref{eq:OFDM048_1} can be equivalently transformed into the vector form with the optimization vector $\mathbf{g}$ as
\begin{align}\label{PAPR_Theta}
\mathrm{diag}(\widetilde{\mathbf{\Theta}}\mathbf{gg}^\mathrm{H}\widetilde{\mathbf{\Theta}}^\mathrm{H}) \leq \frac{\varepsilon P_\mathrm{t}}{NN_\mathrm{\mathrm{t}}}\mathbf{1}_{\Upsilon N_\mathrm{s}N_\mathrm{t}L}.
\end{align}
where $\widetilde{\mathbf{\Theta}}=\mathbf{I}_L\otimes\mathbf{\Theta} \in \mathbb{C}^{\Upsilon N_\mathrm{s}N_\mathrm{t}L \times N_\mathrm{s}N_\mathrm{t}L }$.

Similar to the case with Nyquist-rate sampling in Section \ref{sec:Problem_CP}, we can finally re-express problem ($\widehat{\mathrm{P}}$1) in \eqref{eq:OFDM048} as
\begin{subeqnarray}\label{eq:OFDM54}
({\rm \widehat{P}2}): \underset{\widehat{\mathbf{G}}}{\max} &&\hspace{-4mm} \mathrm{Tr}(\mathbf{Q}\widehat{\mathbf{G}})  \slabel{eq:OFDM54_0} \\
\mathrm{s.t.} ~
&&\hspace{-4mm} \mathrm{diag}(\widehat{\mathbf{\Theta}}\widehat{\mathbf{G}}\widehat{\mathbf{\Theta}}^\mathrm{H}) \leq \frac{\varepsilon P_\mathrm{t}}{NN_\mathrm{\mathrm{t}}}\mathbf{1}_{\Upsilon N_\mathrm{s}N_\mathrm{t}L},   \slabel{eq:OFDM54_1}\\ 
&&\hspace{-4mm} \mathrm{Tr}( \widehat{\mathbf{G}} )=LP_\mathrm{t}^\mathrm{s}+1, \slabel{eq:OFDM54_3} \\
&&\hspace{-4mm} \mathrm{Tr}( \widehat{\mathbf{\Gamma}}_\mathrm{c}\widehat{\mathbf{G}} )=LP_\mathrm{t}^\mathrm{c}, \slabel{eq:OFDM54_6} \\
&&\hspace{-4mm} \widehat{\mathbf{G}}_{N_\mathrm{s}N_\mathrm{t}L+1,N_\mathrm{s}N_\mathrm{t}L+1}=1,   \slabel{eq:OFDM54_2} \\
&&\hspace{-4mm} \widehat{\mathbf{G}}\succeq \mathbf{0},  \slabel{eq:OFDM54_7} \\ 
&&\hspace{-4mm} \mathrm{rank}(\widehat{\mathbf{G}})=1,\slabel{eq:OFDM54_4}
\end{subeqnarray}
where $\widehat{\mathbf{\Theta}}= \big[\widetilde{\mathbf{\Theta}} ~  \mathbf{0}_{\Upsilon N_\mathrm{s}N_\mathrm{t}L \times 1}\big] \in \mathbb{C}^{\Upsilon N_\mathrm{s}N_\mathrm{t}L \times (N_\mathrm{s}N_\mathrm{t}L+1)}$.
Problem ($\widehat{\mathrm{P}}$2) has a similar structure to problem (P5) in \eqref{eq:OFDM005} and can be optimally solved in a similar way.
Moreover, we can also divide the problem ($\widehat{\mathrm{P}}2$) in \eqref{eq:OFDM54} into $L$ subproblems each corresponding to one frame/snapshot as in Section \ref{sec:Waveform_Design_low}, and then leverage the low-complexity algorithm to obtain the low-PAPR DFRC MIMO-OFDM waveform solution by solving $L$ subproblems \mbox{parallelly}.

\section{Simulation Results}\label{sec:simulation}
In this section, simulation results are given to demonstrate the effectiveness of the proposed methods for designing the low-PAPR DFRC MIMO-OFDM waveform in both scenarios with Nyquist-rate sampling (NS) and oversampling (OS).
The performance comparison for these two cases are given to verify the necessity of utilizing oversampling in practical OFDM systems for measuring the PAPR levels of OFDM signals.
The performance results for communications and radar are analyzed with either communication priority or radar priority based on the weighting factor $\rho$, demonstrating the feasibility, effectiveness,  and flexibility of the proposed DFRC \mbox{waveform} design methods. Also, the  performance tradeoff between communications and radar are investigated to show the capability of the proposed waveform design methods in achieving satisfactory balance between these two functionalities.

In the following figures, the proposed low-PAPR DFRC MIMO-OFDM waveform design scheme is named as `DFRC' where we consider four different PAPR threshold values: 0.3dB, 1dB, 2dB, 3dB, and one benchmark without PAPR constraints; The benchmark scheme of desired radar waveform with directional beampattern design in noted as `Directional-Strict'; Another banchmark scheme of power constrained zero forcing (ZF) beamforming is denoted as `ZF-Power Constrained', where the power of transmitted ZF waveform is limited by the same power budget of the `DFRC' scheme.
The unit-power QPSK alphabet is utilized as the constellation for the communication UEs, and the SNR is defined as $P_\mathrm{t}/\sigma^2$. Other basic simulation parameters are listed in Table~\ref{table1} unless specified otherwise. The obtained results in the following figures are averaged over 1000 Monte Carlo simulations.

\setlength{\tabcolsep}{0.3 pt}\begin{table}[thb]
\centering
\caption{Simulation Parameters}\label{table1}
\vspace{-2mm}
{\footnotesize{
\begin{tabular}{|l|l|l|}
\hline
~\textbf{Parameter }&~{\textbf{Symbol}} &~{\textbf{Value}} \\
\hline
~Number of transmit antennas at the DFRC-BS~   &~$N_\mathrm{t}$  \quad\quad\quad&~8 \quad\quad\quad\\
\hline
~Number of subcarriers~   &~$N_\mathrm{s}$  &~16 \\
\hline
~Number of non-zero channel taps~   &~$U$  &~4 \\
\hline
~Length of CP~   &~$N_\mathrm{c}$  &~3 \\
\hline
~Number of active UEs~   &~$K$  &~2\\
\hline
~Number of targets~   &~$M$  &~3\\
\hline
~The frame length~   &~$L$  &~128\\
\hline
~Power budget for each OFDM symbol &~$P_\mathrm{t}$ &~1W  \\
\hline
~Oversampling rate &~$\Upsilon$ &~4 \\
\hline
\end{tabular}
}}
\end{table}

\vspace{-4mm}
\subsection{Performance Comparison for Scenarios with NS and OS}\label{sec:Perf_Comparisons}
In this subsection, we show the performance comparisons between the cases with NS and OS discussed in Section \ref{sec:Problem_CP} and Section \ref{sec:Problem_CPOS}, respectively.
In Fig. \ref{SumRate_rho_GS_GSOS}, Fig. \ref{SER_rho_GS_GSOS}, and Fig. \ref{PD_rho_GS_GSOS}, we depict the curves of the average sum rate, SER, and radar detection probability $\mathcal{P}_\mathrm{D}$ of all the schemes versus the weighting ratio parameter $\rho$. It is known that the radar function has higher priority as $\rho$ approaches to 0 while communication function dominates as $\rho$ approaches to 1.
\begin{figure}[htbp]
\centering
\includegraphics[scale=0.48]{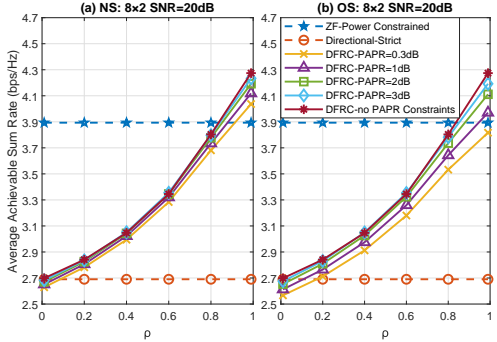}
\vspace{-2mm}
\caption{The average achievable sum rate versus $\rho$ for SNR=20dB.}
\label{SumRate_rho_GS_GSOS}
\end{figure}
\begin{figure}[htbp]
\centering
\includegraphics[scale=0.48]{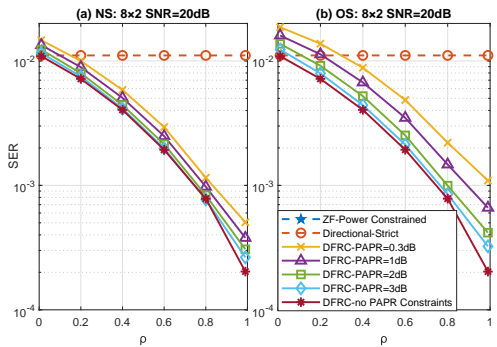}
\vspace{-2mm}
\caption{The average SER versus $\rho$ for SNR=20dB.}
\label{SER_rho_GS_GSOS}
\end{figure}

Fig. \ref{SumRate_rho_GS_GSOS} and Fig. \ref{SER_rho_GS_GSOS} present the variation of two important communication performance metrics: sum rate and SER versus $\rho$. From these two figures, we can observe that the average sum rates of the DFRC scheme increase with $\rho$ while the average SERs decrease with $\rho$ in both the NS and OS scenarios, which coincides the intuition that the communication priority increases with $\rho$ and thus better communication performance can be achieved with larger $\rho$.  The benchmark performance of the ZF-Power Constrained and the Directional-Strict schemes is not affected by $\rho$. It is easy to note that the communication performance of sum rates and SERs for the proposed DFRC scheme degrades to the Directional-Strict scheme when $\rho$ approaches 0, which operates a desired radar-only \mbox{waveform} design. Correspondingly, the communication performance approaches to the ZF-Power Constrained scheme as $\rho$ is close to 1,  operating a communication-only waveform design.\footnote{In Fig.  \ref{SER_rho_GS_GSOS}, the SER values of the ZF-Power Constrained schemes for the two cases are both 0, so the curves are not drawn with the log-scale y-axis.}
\begin{figure}[htbp]
\centering
\includegraphics[scale=0.48]{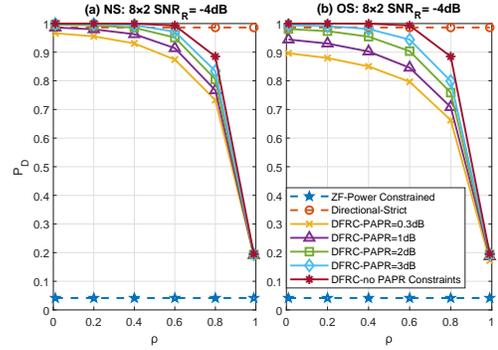}
\vspace{-2mm}
\caption{\hspace{-1mm}The average radar detection probability $\mathcal{P}_\mathrm{D}$ versus $\rho$ for SNR$_\mathrm{R}$=$-$4dB.}
\label{PD_rho_GS_GSOS}
\end{figure}

Fig. \ref{PD_rho_GS_GSOS} shows that the radar  detection probability values $\mathcal{P}_\mathrm{D}$ of the DFRC scheme decreases as  $\rho$ increases along with the design focus transforming from radar priority to  communication priority. Similarly, it demonstrates that the radar detection performance can catch up the Directional-Strict scheme when $\rho$ is close 0 while gradually degrades to the ZF-Power Constrained scheme as $\rho$ approaches 1. Note that the proposed DFRC scheme can achieve better radar detection performance (also better sum rate performance as shown in Fig. \ref{SumRate_rho_GS_GSOS}) than the ZF-Power Constrained scheme when $\rho$ approaches 1, demonstrating the effectiveness of the proposed  methods for low-PAPR DFRC MIMO-OFDM waveform design.

From Fig. \ref{SumRate_rho_GS_GSOS} to Fig. \ref{PD_rho_GS_GSOS}, we can see that better DFRC performance can be achieved with more relaxed PAPR constraints as PAPR threshold from 0.3dB to 3dB and even without PAPR constraints. Moreover, it is clearly shown that the gaps between the DFRC curves with  different PAPR constraints are much wider in the case with OS, indicating that the performance degradations caused by tightening the PAPR constraints are more obvious in the case with OS compared to NS. It is known that high PAPR is a vital disadvantage of the OFDM technique, and thus the PAPR levels of DFRC MIMO-OFDM waveforms should be carefully measured and then restrained  through effective low-PAPR DFRC MIMO-OFDM waveform design so as to meet the hardware requirements. In the case with NS, the measurement of PAPR is less accurate compared with OS due to insufficient sampling, resulting in a set of compact DFRC curves even with different PAPR constraints. These three figures verify the necessity of oversampling in measuring the accurate PAPR levels of OFDM waveforms in practical systems with low-cost non-linear power amplifiers.  Hence, in the following sub-sections, we will present more simulation results of low-PAPR MIMO-OFDM waveform design for the case with OS.

\vspace{-2mm}
\subsection{More Simulation Results for Communications and Radar  with Oversampling}\label{sec:Perf_Comms}
\subsubsection{Performance for Communications}
In Fig. \ref{SumRate_SER_SNR_rho08_GSOS} and Fig. \ref{SumRate_SER_SNR_rho02_GSOS}, the communication performance results of average sum rate and SER versus SNR are shown in scenarios with communication priority ($\rho=0.8$) and radar priority ($\rho=0.2$).
\begin{figure}[htbp]
\centering
\includegraphics[scale=0.48]{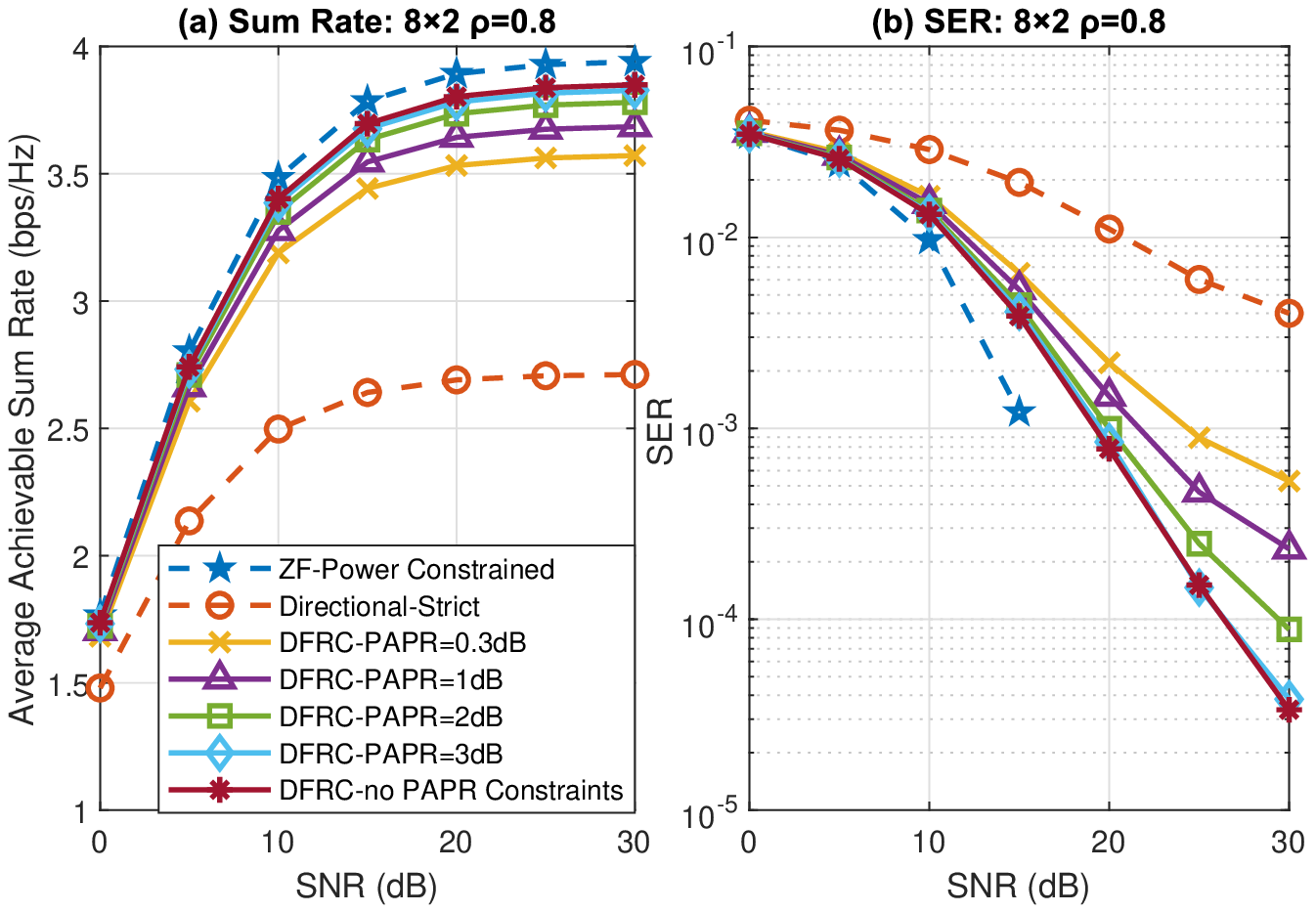}
\vspace{-2mm}
\caption{The average achievable sum rate and SER versus SNR for comunication priority with $\rho$=0.8.}
\label{SumRate_SER_SNR_rho08_GSOS}
\end{figure}
\begin{figure}[htbp]
\centering
\includegraphics[scale=0.48]{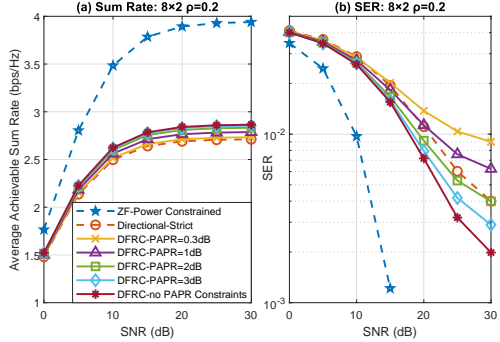}
\vspace{-2mm}
\caption{The average achievable sum rate and SER versus SNR for radar priority with $\rho$=0.2.}
\label{SumRate_SER_SNR_rho02_GSOS}
\end{figure}

From Fig. \ref{SumRate_SER_SNR_rho08_GSOS} and Fig. \ref{SumRate_SER_SNR_rho02_GSOS}, it is easy to observe that the communication-only ZF-Power Constrained scheme provides a upper bound while the radar-only Directional-Strict scheme provides a lower bound for communication performance in both scenarios with communication priority  ($\rho=0.8$)  and radar priority  ($\rho=0.2$). For the case with communication priority ($\rho=0.8$), the communication performance of sum rate and SER for the DFRC scheme is highly superior to the Directional-Strict scheme and close to the ZF-Power Constrained scheme. For the case with radar priority ($\rho=0.2$), the communication performance of sum rate degrades to the Directional-Strict scheme. The SER performance of the DFRC scheme with more restrict PAPR constraints such as 0.3dB and 1dB is worse than the Directional-Strict scheme since no PAPR constraints are executed for the Directional-Strict scheme. Both two figures show that average sum rates of all the schemes gradually saturate as the SNR increases due to the fact the noise power gradually becomes negligible as SNR increases considering the fixed limited signal power and thus MUI dominates leading to the sum rate saturation.\footnote{In order to make a fair comparison, we also limit the ZF signal power  with the same power budget $P_\mathrm{t}$ for each OFDM symbol, which restricts the performance of ZF scheme and leads to saturation of sum rate performance for ZF-Power Constrained scheme at high SNR.}
The values of SER decreases as SNR increases and the restriction effects of PAPR constraints on SER performance become more obvious with the enlarging of SNR.

\subsubsection{Performance for Radar}
In Fig. \ref{Beam_rho02_GSOS}  and Fig. \ref{Beam_rho08_GSOS}, the radar performance results of detection beampattern are presented in the scenarios with radar priority ($\rho=0.2$) and communication priority ($\rho=0.8$), where three targets of interest with angles of $-\pi/3$, 0 and $\pi/3$ are considered. In Fig. \ref{PD_rho_m2m4_GSOS}, we further show the radar detection probability $\mathcal{P}_\mathrm{D}$ versus $\rho$ in two scenarios with different radar received target echo SNR$_\mathrm{R}$.
\begin{figure}[htbp]
\centering
\includegraphics[scale=0.48]{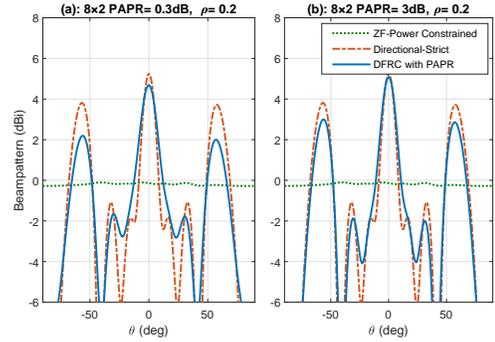}
\vspace{-2mm}
\caption{Radar beampatterns for radar priority with $\rho$=0.2.}
\label{Beam_rho02_GSOS}
\end{figure}
\begin{figure}[htbp]
\centering
\includegraphics[scale=0.48]{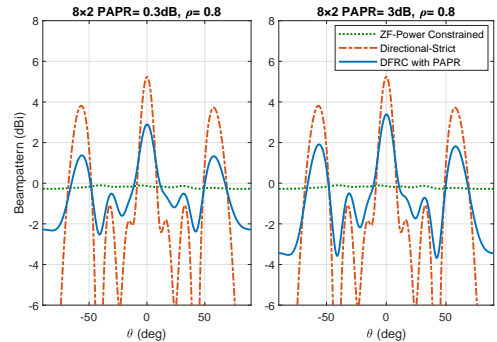}
\vspace{-2mm}
\caption{Radar beampatterns for communication priority with $\rho$=0.8.}
\label{Beam_rho08_GSOS}
\end{figure}

The performance of radar beampattern shown in Fig. \ref{Beam_rho02_GSOS} and Fig. \ref{Beam_rho08_GSOS} demonstrates the effectiveness of the designed low-PAPR DFRC MIMO-OFDM waveform in detecting targets, where we can observe that three targets can be clearly distinguished in both the cases  with radar priority  ($\rho=0.2$)  and communication priority  ($\rho=0.8$). For radar priority  ($\rho=0.2$), the beampattern of the proposed DFRC scheme can well match the desired Directional-Strict scheme and can be enhanced by relaxed PAPR constraints. Looser beampattern fit is achieved in the case of communication priority ($\rho=0.8$) but can still achieve satisfactory detection performance.  For the communication-only ZF-Power Constrained scheme, the beam gains at all directions are almost the same which is impossible to detect the interested targets.
From the comparison of the DFRC scheme between two sub-figures with different PAPR thresholds in Fig. \ref{Beam_rho02_GSOS}-\ref{Beam_rho08_GSOS}, we can see that better beampattern match can be achieved with relaxed PAPR constraints.
\begin{figure}[htbp]
\centering
\includegraphics[scale=0.48]{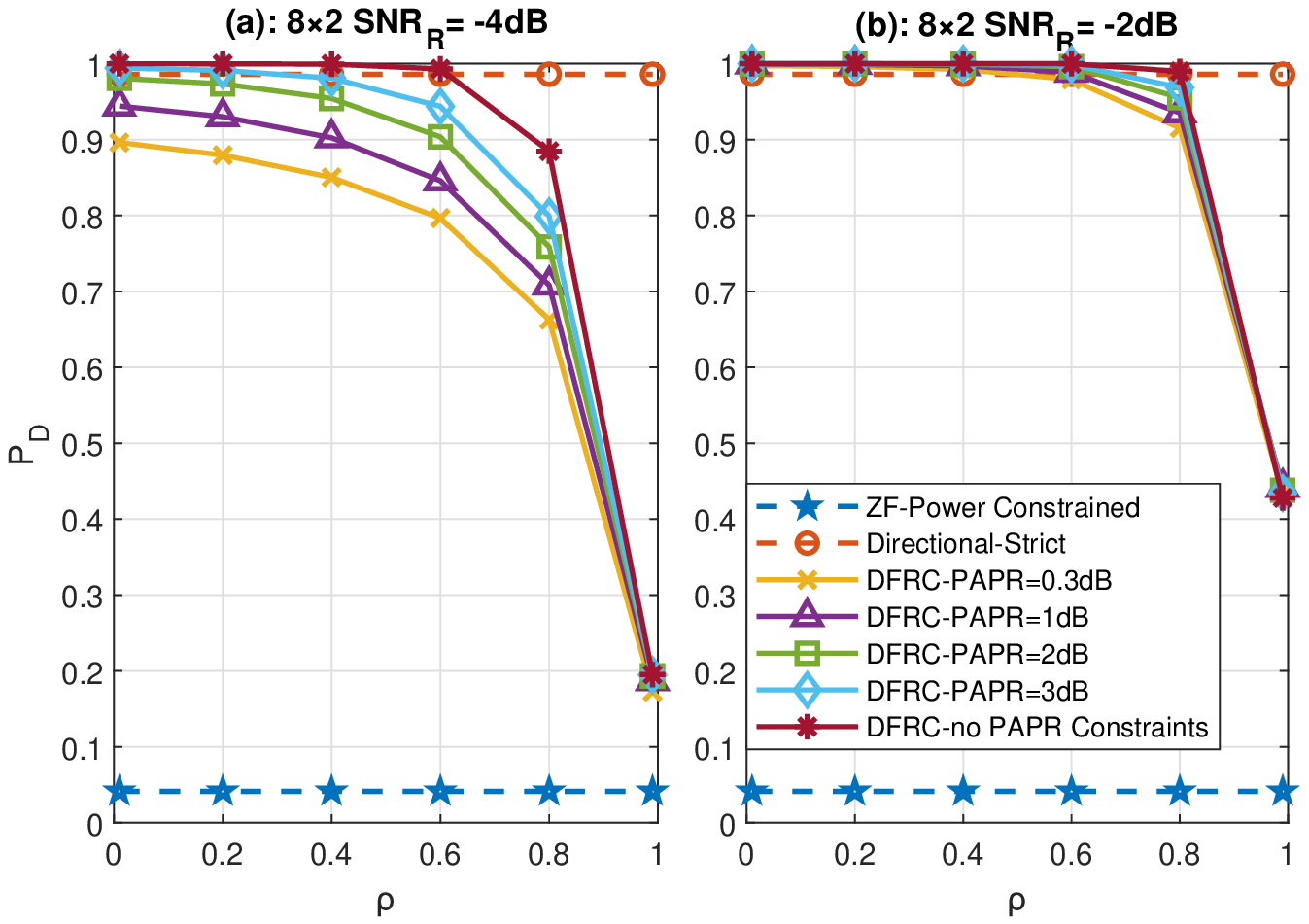}
\vspace{-2mm}
\caption{The average radar detection probability $\mathcal{P}_\mathrm{D}$ versus $\rho$ for SNR$_\mathrm{R}=-$4dB and SNR$_\mathrm{R}=-$2dB.}
\label{PD_rho_m2m4_GSOS}
\end{figure}

The performance results of radar detection probability $\mathcal{P}_\mathrm{D}$ versus $\rho$ are depicted in Fig. \ref{PD_rho_m2m4_GSOS} with SNR$_\mathrm{R}=-$4dB and SNR$_\mathrm{R}=-$2dB.
As a crucial radar performance metric, $\mathcal{P}_\mathrm{D}$ is highly affected by the  radar received target echo SNR$_\mathrm{R}$, which is described in Section \ref{Pd_SNRR}. The larger SNR$_\mathrm{R}$, the better radar detection performance can be achieved.
For the case of SNR$_\mathrm{R}=-$4dB, $\mathcal{P}_\mathrm{D}\geq0.8$ can be achieved by all the DFRC curves when $\rho\leq0.6$ even with the strictest PAPR constraint of 0.3dB, and $\mathcal{P}_\mathrm{D}\geq0.65$ for the case with communication priority $\rho=0.8$.
When SNR$_\mathrm{R}$ increases to $-$2dB, DFRC curves are capable to achieve $\mathcal{P}_\mathrm{D}\geq0.9$ for $\rho=0.8$ and the detection probability can always approach 1 as the Directional-Strict scheme  when $\rho\leq 0.6$.

\vspace{-2mm}
\subsection{Performance Tradeoff between Communications and Radar}\label{sec:Perf_Tradeoff}
In this section, we try to analyze the direct performance tradeoff between functionalities of communications and radar.
The communication results of average sum rate and SER versus the radar detection probability $\mathcal{P}_\mathrm{D}$ are depicted in Fig. \ref{SumRate_SER_PD_SNRRm4_GSOS} and Fig. \ref{SumRate_SER_PD_SNRRm2_GSOS} for SNR$_\mathrm{R}=-$4dB and SNR$_\mathrm{R}=-$2dB, respectively. The six points shown in these two figures correspond to $\rho=[0.01,0.2,0.4,0.6,0.8,0.99]$.
\begin{figure}[htbp]
\centering
\includegraphics[scale=0.48]{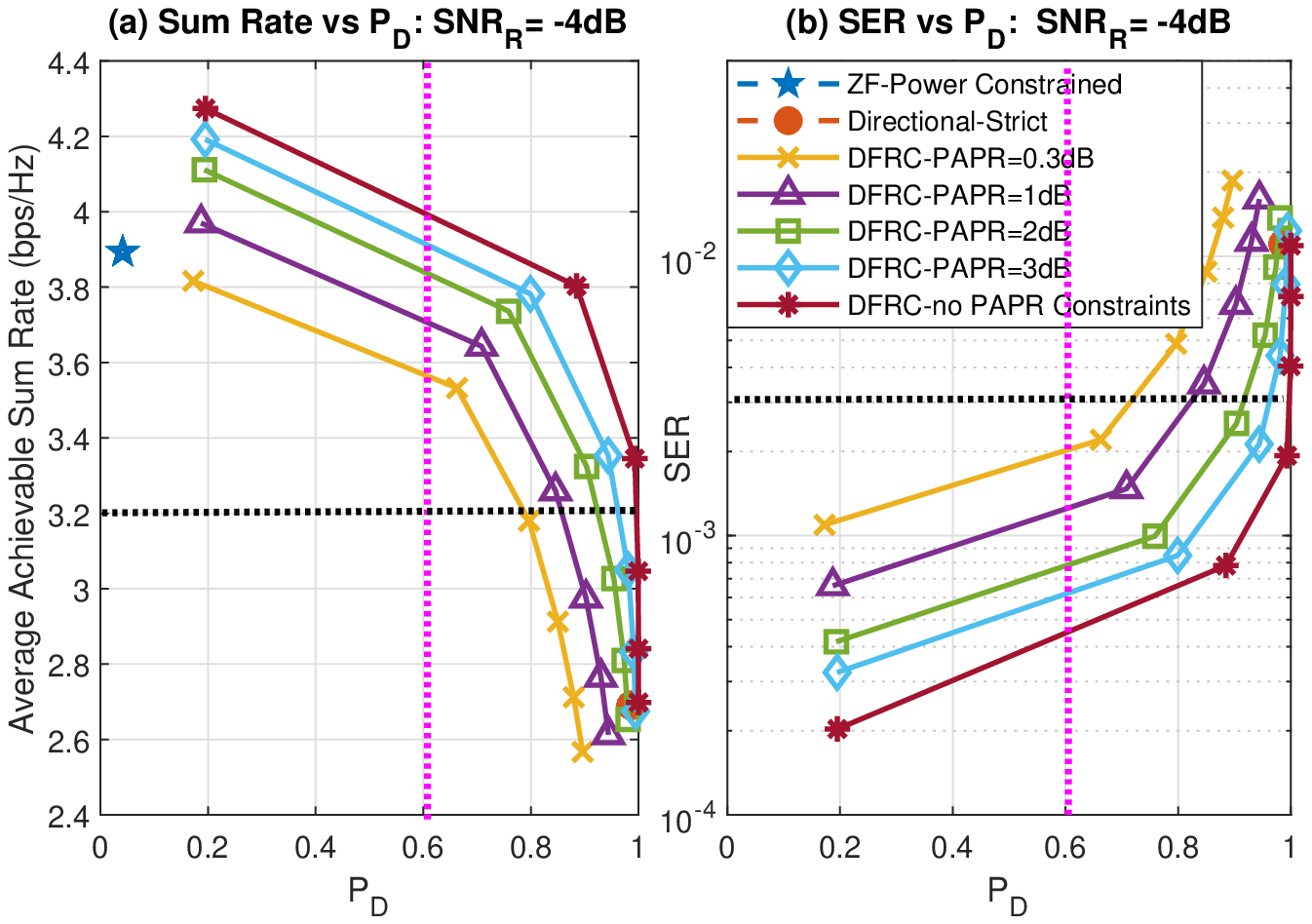}
\vspace{-2mm}
\caption{The average achievable communication sum rate and SER for SER=20dB versus the radar detection probability $\mathcal{P}_\mathrm{D}$ for SNR$_\mathrm{R}$=$-$4dB.}
\label{SumRate_SER_PD_SNRRm4_GSOS}
\end{figure}
\begin{figure}[htbp]
\centering
\includegraphics[scale=0.48]{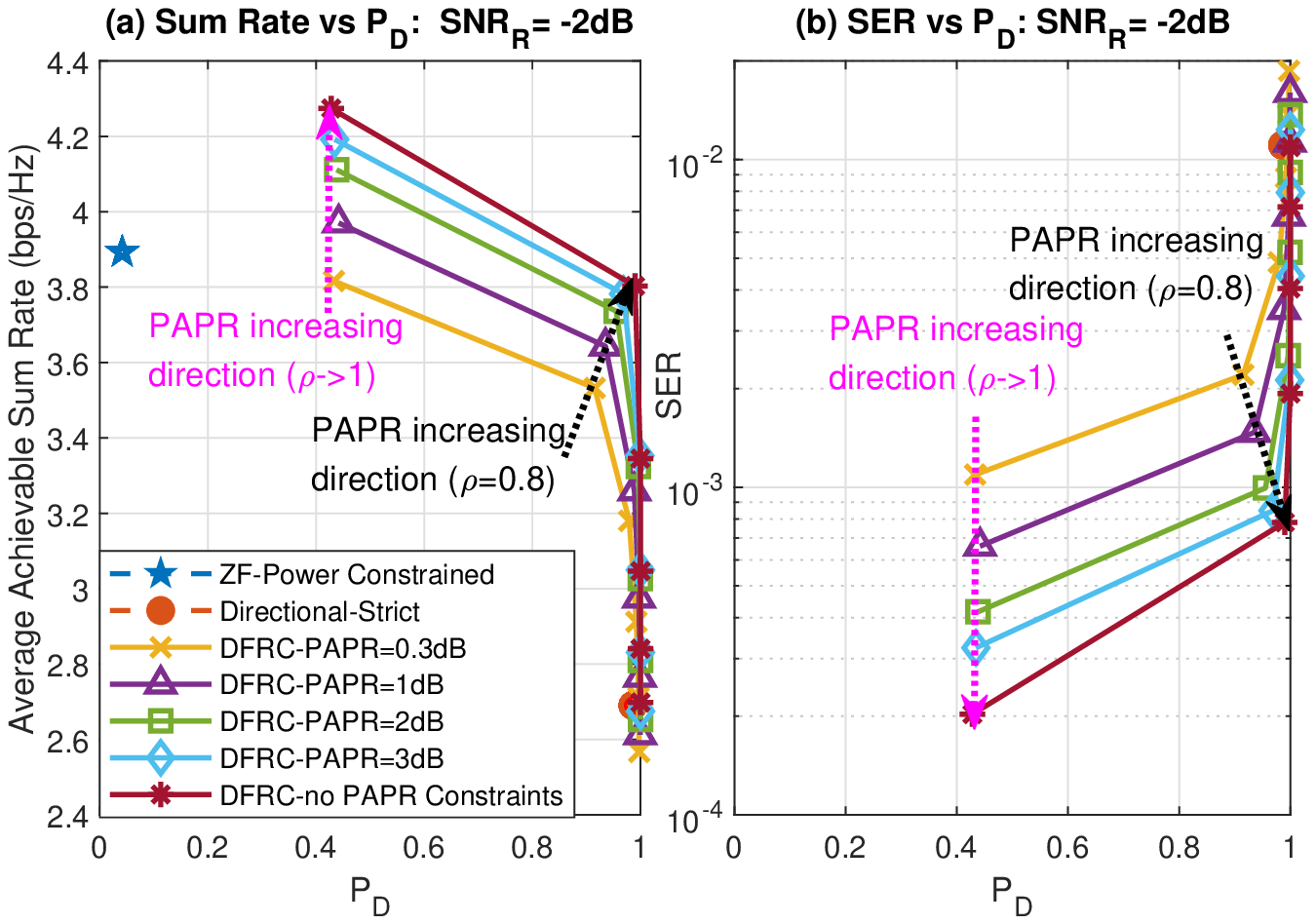}
\vspace{-2mm}
\caption{The average achievable communication sum rate and SER for SER=20dB versus the radar detection probability $\mathcal{P}_\mathrm{D}$ for SNR$_\mathrm{R}$=$-$2dB.}
\label{SumRate_SER_PD_SNRRm2_GSOS}
\end{figure}

In Fig. \ref{SumRate_SER_PD_SNRRm4_GSOS}, we draw two required thresholds for sum rate and $\mathcal{P}_\mathrm{D}$ in sub-figure (a) and two thresholds for SER and $\mathcal{P}_\mathrm{D}$ in sub-figure (b). For requirements of ($\mathcal{P}_\mathrm{D}$,sum rate), the points located in the top right corner of sub-figure (a) are satisfactory. In contrast, the points in the bottom right corner of sub-figure (b) are satisfactory for the ($\mathcal{P}_\mathrm{D}$,SER)  requirements. Giving the requirements of these three performance metrics for communications and radar, we can effectively set parameter $\rho$ satisfying certain PAPR constraints (hardware requirements).

In Fig. \ref{SumRate_SER_PD_SNRRm2_GSOS} with SNR$_\mathrm{R}=-$2dB, the $\mathcal{P}_\mathrm{D}$ requirements are easier to be satisfied since most of the points are located in the area of $\mathcal{P}_\mathrm{D}\geq0.8$, and thus the satisfactory settings of $\rho$ and PAPR thresholds can be flexibly selected based on the requirements of sum rate and SER. 
\begin{figure}[htbp]
\centering
\includegraphics[scale=0.48]{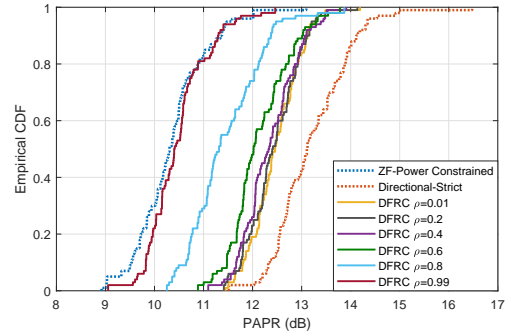}
\vspace{-2mm}
\caption{The empirical CDF of the PAPR (dB) without considering the PAPR constraints.}
\label{CDF_noPAPR}
\end{figure}

In Fig. \ref{CDF_noPAPR}, the empirical CDFs of waveform PAPR  for the proposed DFRC  scheme with different $\rho$ values  as well as the two benchmarks are provided, where the PAPR constraints are not considered. It is clear to see that the PAPR of the proposed DFRC scheme is within the range of [9,14]dB if no PAPR constraints are used. For radar priority of $\rho$ approaching 0, the PAPR levels with average value of 12.5dB get close to the radar-only Directional-Strict scheme with average value of 13.5dB. For communication priority of $\rho$ close to 1, the PAPR levels approach to the communication-only ZF-Power Constrained scheme with average value of 10.5dB. All these PAPR levels are too high to satisfy the strict hardware requirements in practical communication systems. In comparison with the simulation results in the previous figures, we can clearly observe that we can restrain the PAPR levels of the DFRC waveform to 3dB, 2dB, 1dB and even 0.3dB without too much performance degradation for both communications and radar, demonstrating the feasibility and effectiveness of the proposed low-PAPR DFRC MIMO-OFDM waveform design methods.

\vspace{-1mm}
\section{Conclusion}\label{sec:conclusion}
In this paper, we investigate the low-PAPR DFRC waveform design for MIMO-OFDM systems, where the scenario with Nyquist-rate sampling and oversampling are considered for measuring the waveform PAPR levels. A weighted objective function on normalized communication performance metric MUI and radar performance metric (distance to desired radar waveform) is minimized under the transmit power and PAPR constraints. The optimization problems can be transformed into standard SDP, then the SDR method is leveraged to find the optimal solution where the rank-1 constraint is satisfied in general. In addition, low-complexity methods are provided to reduce the overload for solving the SDP problems where the original problems are divided into subproblems corresponding to each frame/snapshot. The simulation results demonstrate that oversampling on OFDM signals can provide an accurate way for measuring the PAPR levels. Moreover, the feasibility, effectiveness, and flexibility of the proposed low-PAPR DFRC MIMO-OFDM waveform  design methods are verified by sufficient numerical simulations.

\vspace{-2mm}
\appendices
\section{Proof of Lemma~\ref{lemma0}}\label{sec:proof_lem0}
Considering the IDFT operation applied at the DFRC-BS transmitter  and the DFT processing operated at the downlink UE receivers, the received baseband signal of  UE $k$ during frame $l \in \mathcal{L}$  in the frequency domain can be  expressed as 
\begin{align}\label{Eff_ykl1}
\mathbf{y}_k^l=\mathbf{F}_\mathrm{s} \mathbf{H}_{k}^\mathrm{s}(\mathbf{F}_\mathrm{s}^\mathrm{H}\otimes \mathbf{I}_{N_\mathrm{t}})\mathbf{x}_\mathrm{s}^l \in \mathbb{C}^{N_\mathrm{s}\times 1}.
\end{align}
where $\mathbf{x}_\mathrm{s}^l \in \mathbb{C}^{N_\mathrm{s}N_\mathrm{t} \times 1}$ is the $l$-th column of $\mathbf{X}_\mathrm{s}$ and $\mathbf{H}_{k}^\mathrm{s} \in \mathbb{C}^{N_\mathrm{s}\times N_\mathrm{s}N_\mathrm{t}}$ is a \emph{block circulant channel matrix} given as
{\small{
\begin{align} \label{H_cir_Compact}
 \mathbf{H}_{k}^\mathrm{s}=\left[
 \begin{matrix}
 \widetilde{\mathbf{h}}_{k,0}   &\hspace{-2mm}\mathbf{0}                     &\hspace{-2mm}\cdots  &\hspace{-2mm}\mathbf{0}  &\hspace{-2mm}\widetilde{\mathbf{h}}_{k,U-1}                             &\hspace{-2mm}\cdots    &\hspace{-2mm}\widetilde{\mathbf{h}}_{k,1}\\
 \widetilde{\mathbf{h}}_{k,1}   &\hspace{-2mm}\widetilde{\mathbf{h}}_{k,0}   &\hspace{-2mm}\cdots  &\hspace{-2mm}\mathbf{0}  &\hspace{-2mm}\mathbf{0}      &\hspace{-2mm}\cdots    &\hspace{-2mm}\widetilde{\mathbf{h}}_{k,2}\\
 \vdots                         &\hspace{-2mm}\vdots                         &\hspace{-2mm}\ddots  &\hspace{-2mm}\vdots                         &\hspace{-2mm}\vdots          &\hspace{-2mm}\ddots    &\hspace{-2mm}\vdots \\
 \widetilde{\mathbf{h}}_{k,U-1} &\hspace{-2mm}\widetilde{\mathbf{h}}_{k,U-2} &\hspace{-2mm}\cdots  &\hspace{-2mm}\widetilde{\mathbf{h}}_{k,0}   &\hspace{-2mm}\mathbf{0}                &\hspace{-2mm}\cdots    &\hspace{-2mm}\mathbf{0} \\
 \mathbf{0}                     &\hspace{-2mm}\widetilde{\mathbf{h}}_{k,U-1} &\hspace{-2mm}\cdots  &\hspace{-2mm}\widetilde{\mathbf{h}}_{k,1}   &\hspace{-2mm}\widetilde{\mathbf{h}}_{k,0}    &\hspace{-2mm}\cdots    &\hspace{-2mm}\mathbf{0} \\
 \vdots                         &\hspace{-2mm}\vdots                         &\hspace{-2mm}\ddots  &\hspace{-2mm}\vdots                         &\hspace{-2mm}\vdots
  &\hspace{-2mm}\ddots   &\hspace{-2mm}\mathbf{0} \\
 \mathbf{0}                     &\hspace{-2mm}\mathbf{0}                     &\hspace{-2mm}\cdots  &\hspace{-2mm}\widetilde{\mathbf{h}}_{k,U-1} &\hspace{-2mm}\widetilde{\mathbf{h}}_{k,U-2}
 &\hspace{-2mm}\cdots    &\hspace{-2mm}\widetilde{\mathbf{h}}_{k,0}
  \end{matrix}
  \right],
\end{align}
}}
thanks to the operation of adding CP.
We further denote the overall effective downlink channel matrix of UE $k$  as 
\begin{align}
\mathbf{h}_k&\triangleq\mathbf{F}_\mathrm{s} \mathbf{H}_{k}^\mathrm{s}(\mathbf{F}_\mathrm{s}^\mathrm{H}\otimes \mathbf{I}_{N_\mathrm{t}})\in \mathbb{C}^{N_\mathrm{s}\times N_\mathrm{s}N_\mathrm{t}} \nonumber \\
&=\mathbf{F}_\mathrm{s} \mathbf{H}_{k}^\mathrm{s}\mathbf{\Omega}\mathbf{\Omega}^\mathrm{T}(\mathbf{F}_\mathrm{s}^\mathrm{H}\otimes \mathbf{I}_{N_\mathrm{t}})\mathbf{\Omega}\mathbf{\Omega}^\mathrm{T} \nonumber \\
&\overset{(a)}{=}\mathbf{F}_\mathrm{s}[ \mathbf{H}_{k,1}^\mathrm{s},\cdots, \mathbf{H}_{k,N_\mathrm{t}}^\mathrm{s}]\mathbf{\Omega}^\mathrm{T}(\mathbf{F}_\mathrm{s}^\mathrm{H}\otimes \mathbf{I}_{N_\mathrm{t}})\mathbf{\Omega}\mathbf{\Omega}^\mathrm{T} \nonumber \\
&\overset{(b)}{=}\mathbf{F}_\mathrm{s} [\mathbf{H}_{k,1}^\mathrm{s}, \cdots, \mathbf{H}_{k,N_\mathrm{t}}^\mathrm{s}](\mathbf{I}_{N_\mathrm{t}} \otimes \mathbf{F}_\mathrm{s}^\mathrm{H}) \mathbf{\Omega}^\mathrm{T} \nonumber \\
&\overset{(c)}{=}[\mathbf{F}_\mathrm{s} \mathbf{H}_{k,1}^\mathrm{s} \mathbf{F}_\mathrm{s}^\mathrm{H}, \cdots, \mathbf{F}_\mathrm{s} \mathbf{H}_{k,N_\mathrm{t}}^\mathrm{s} \mathbf{F}_\mathrm{s}^\mathrm{H}]\mathbf{\Omega}^\mathrm{T} \nonumber \\
&\overset{(d)}{=}[\mathbf{\Lambda}_{k,1}, \cdots,\mathbf{\Lambda}_{k,N_\mathrm{t}}]\mathbf{\Omega}^\mathrm{T} \nonumber \\
&\overset{(e)}{=}\mathrm{diag}(\mathbf{h}_{k,1}, \cdots, \mathbf{h}_{k,N_\mathrm{s}}) \in \mathbb{C}^{N_\mathrm{s}\times N_\mathrm{s}N_\mathrm{t}},
\end{align}
where (a) holds by column permutation $\mathbf{H}_{k}^\mathrm{s}\mathbf{\Omega}$ with $\mathbf{\Omega}\mathbf{\Omega}^\mathrm{T}=\mathbf{I}_{N_\mathrm{s}N_\mathrm{t}}$ that  $\mathbf{H}_{k}^\mathrm{s}$ is transformed into $N_\mathrm{t}$ circulant channel matrix $\mathbf{H}_{k,n_\mathrm{t}}^\mathrm{s} \in \mathbb{C}^{N_\mathrm{s}\times N_\mathrm{s}}$ for $n_\mathrm{t}\in \mathcal{N}_\mathrm{t}=\{1,\cdots,N_\mathrm{t}\}$.
In addition, through both raw and column permutations, we have  $\mathbf{\Omega}^\mathrm{T}(\mathbf{F}_\mathrm{s}^\mathrm{H}\otimes \mathbf{I}_{N_\mathrm{t}})\mathbf{\Omega}=(\mathbf{I}_{N_\mathrm{t}} \otimes \mathbf{F}_\mathrm{s}^\mathrm{H})$, leading to (b).
The equation of (c) is based on the special structure of $(\mathbf{I}_{N_\mathrm{t}} \otimes \mathbf{F}_\mathrm{s}^\mathrm{H})$.
Then $\mathbf{F}_\mathrm{s} \mathbf{H}_{k,n_\mathrm{t}}^\mathrm{s} \mathbf{F}_\mathrm{s}^\mathrm{H}=\mathbf{\Lambda}_{k,n_\mathrm{t}}$ as (d) \cite{TCOM2002_B.Muquet_Cyclic}, where $\mathbf{\Lambda}_{k,n_\mathrm{t}} \in \mathbb{C}^{N_\mathrm{s}\times N_\mathrm{s}}$ is a diagonal matrix with elements $\Lambda_{k,n_\mathrm{t},n}=\sum_{u=0}^{U-1}\widetilde{h}_{k,u,n_\mathrm{t}}e^{-\frac{j2\pi u(n-1)}{N_\mathrm{s}}}$ for  $n_\mathrm{t} \in \mathcal{N}_\mathrm{t}$ and $n\in \mathcal{N}_\mathrm{s}$. After the final column permutation on $[\mathbf{\Lambda}_{k,1}, \cdots,\mathbf{\Lambda}_{k,N_\mathrm{t}}]$ with $\mathbf{\Omega}^\mathrm{T}$ in (d), a diagonal structure can be obtained as in (e) where 
\begin{align}
&\mathbf{h}_{k,n}=[\Lambda_{k,1,n}, \cdots, \Lambda_{k,N_\mathrm{t},n}] \nonumber \\
=&\sum_{u=0}^{U-1}[\widetilde{h}_{k,u,1}, \cdots, \widetilde{h}_{k,u,N_\mathrm{t}}]e^{-\frac{j2\pi u(n-1)}{N_\mathrm{s}}} \nonumber \\
=&\sum_{u=0}^{U-1}\widetilde{\mathbf{h}}_{k,u}e^{-\frac{j2\pi u(n-1)}{N_\mathrm{s}}} \in \mathbb{C}^{1\times N_\mathrm{t}},~ n \in \mathcal{N}_\mathrm{s}, ~k \in \mathcal{K}.
\end{align}
Hence, combining \eqref{Eff_ykl1}, we have
\begin{align}
\mathbf{y}_k^l=\mathbf{h}_k\mathbf{x}_\mathrm{s}^l=\mathrm{diag}(\mathbf{h}_{k,1}, \cdots, \mathbf{h}_{k,N_\mathrm{s}})\mathbf{x}_\mathrm{s}^l.
\end{align}
Further considering all the $K$ users and $L$ symbol frames, a compact form of the noiseless received signal can be given as
\begin{align}
\mathbf{Y}_\mathrm{s}=\mathbf{H}_\mathrm{s}\mathbf{X}_\mathrm{s}\in \mathbb{C}^{N_\mathrm{s}K\times L},
\end{align}
where $\mathbf{H}_\mathrm{s}=\mathrm{diag}(\mathbf{H}_1,\cdots, \mathbf{H}_{N_\mathrm{s}}) \in \mathbb{C}^{N_\mathrm{s}K\times N_\mathrm{s}N_\mathrm{t}}$
with $\mathbf{H}_n=[{\mathbf{h}}_{1,n}^\mathrm{T}, \cdots, {\mathbf{h}}_{K,n}^\mathrm{T}]^\mathrm{T} \in \mathbb{C}^{K\times N_\mathrm{t}}$.
The proof of Lemma \ref{lemma0} has been completed.

\vspace{-2mm}
\section{Proof of Lemma~\ref{lemma1}}\label{sec:proof_lem1}

It is easy to observe that the problem (P5) in \eqref{eq:OFDM005} ignoring the rank-1 constraint \eqref{eq:OFDM005_4} is a convex optimization problem, and thus the optimal solution must satisfy the Karush-Kuhn-Tacker (KKT) conditions \cite{B_Boyd04Convex}. We first express the Lagrangian function of such problem as
\vspace{-2mm}

{\small{
\begin{align}
\mathcal{L}=&\mathrm{Tr}\big(\mathbf{Q}\widehat{\mathbf{G}}\big)+\sum_{q=1}^{N_\mathrm{s}N_\mathrm{t}L}
\lambda_q\left( \mathrm{Tr}\big(\mathbf{\Pi}_q\widehat{\mathbf{G}} \big)-\frac{\varepsilon P_\mathrm{t}}{NN_\mathrm{\mathrm{t}}} \right) \nonumber \\
&+\nu\left(\mathrm{Tr}\big( \widehat{\mathbf{G}} \big)-LP_\mathrm{t}^\mathrm{s}-1 \right)
+\varrho\left(\mathrm{Tr} \big( \widehat{\mathbf{\Gamma}}_\mathrm{c}\widehat{\mathbf{G}} \big)-LP_\mathrm{t}^\mathrm{c} \right) \nonumber \\
&-\mathrm{Tr}\big(\mathbf{\Psi}\widehat{\mathbf{G}}\big)
-\varphi\left( \mathrm{Tr}\big(\mathbf{\Pi}_{N_\mathrm{s}N_\mathrm{t}L+1}\widehat{\mathbf{G}} \big)-1\right),
\end{align}
}}
\hspace{-2mm}where $\lambda_q\geq0$ for $q \in \mathcal{N}_\mathrm{stL}$ and $\mathbf{\Psi}\succeq0$ are the Lagrangian multipliers corresponding to the inequality constraints \eqref{eq:OFDM005_1} and \eqref{eq:OFDM005_7}, while $\nu$, $\varrho$ and $\varphi$ are the Lagrangian multipliers corresponding to the equality constraints \eqref{eq:OFDM005_3}, \eqref{eq:OFDM005_6}, and \eqref{eq:OFDM005_2}. For the global optimal solution of the considered problem, the optimal Lagrangian multipliers are uniquely determined.

Here, $\mathbf{\Pi}_q \in \mathbb{R}^{{N_\mathrm{s}N_\mathrm{t}L+1}}$ is a diagonal matrix where only the $q$-th  element on the diagonal line is non-zero with value of 1. Hence, we can obtain the KKT conditions given below
\vspace{-4mm}

{\small{
\begin{align}
&\hspace{-1mm}\frac{\partial \mathcal{L}}{\partial \widehat{\mathbf{G}}}=\mathbf{Q}+\hspace{-2mm}\sum_{q=1}^{N_\mathrm{s}N_\mathrm{t}L}\hspace{-2mm}\lambda_q \mathbf{\Pi}_q +\nu \mathbf{I} +\varrho\widehat{\mathbf{\Gamma}}_\mathrm{c}-\mathbf{\Psi}-\varphi\mathbf{\Pi}_{N_\mathrm{s}N_\mathrm{t}L+1}=0, \label{Lag_G} \\
&\hspace{-1mm}\lambda_q\left( \mathrm{Tr}\big(\mathbf{\Pi}_q\widehat{\mathbf{G}} \big)-\frac{\varepsilon P_\mathrm{t}}{NN_\mathrm{\mathrm{t}}} \right)=0, ~\lambda_q\geq0, ~q \in \mathcal{N}_\mathrm{stL}, \\
&\hspace{-1mm}\mathrm{Tr}\big(\mathbf{\Psi}\widehat{\mathbf{G}}\big)=0, ~\mathbf{\Psi}\succeq\mathbf{0}\\
&\hspace{-1mm}\mathrm{Tr}\big( \widehat{\mathbf{G}} \big)-LP_\mathrm{t}^\mathrm{s}-1=0, \\
&\hspace{-1mm}\mathrm{Tr} \big( \widehat{\mathbf{\Gamma}}_\mathrm{c}\widehat{\mathbf{G}} \big)-LP_\mathrm{t}^\mathrm{c} =0, \\
&\hspace{-1mm}\mathrm{Tr}\big(\mathbf{\Pi}_{N_\mathrm{s}N_\mathrm{t}L+1}\widehat{\mathbf{G}} \big)-1=0.
\end{align}
}}
\hspace{-1.5mm}The above KKT conditions should be satisfied with the global optimal solution of problem (P5) in \eqref{eq:OFDM005} ignoring the rank-1 constraint. We denote the optimal solution  as $\widehat{\mathbf{G}}^\mathrm{o}$, and the corresponding optimal Lagrangian multipliers are denoted as $\lambda_q^\mathrm{o}$, $\nu^\mathrm{o}$, $\varrho^\mathrm{o}$, $\varphi^\mathrm{o}$, and $\mathbf{\Psi}^\mathrm{o}$.
From the condition \eqref{Lag_G}, we have
\begin{align}\label{Omega_opt}
\mathbf{\Psi}^\mathrm{o}&=\mathbf{Q}+\hspace{-2mm}\sum_{q=1}^{N_\mathrm{s}N_\mathrm{t}L}\hspace{-2mm}\lambda_q^\mathrm{o} \mathbf{\Pi}_q +\nu^\mathrm{o} \mathbf{I} +\varrho^\mathrm{o}\widehat{\mathbf{\Gamma}}_\mathrm{c}-\varphi^\mathrm{o}\mathbf{\Pi}_{N_\mathrm{s}N_\mathrm{t}L+1} \nonumber \\
&=\mathbf{Q}+\mathbf{\Pi}^\mathrm{o}+\nu^\mathrm{o} \mathbf{I} +\varrho^\mathrm{o}\widehat{\mathbf{\Gamma}}_\mathrm{c}-\varphi^\mathrm{o}\mathbf{\Pi}_{N_\mathrm{s}N_\mathrm{t}L+1},
\end{align}
where $\mathbf{\Pi}^\mathrm{o}\succeq \mathbf{0}$ is a diagonal matrix with 
\begin{align}
\mathrm{diag}(\mathbf{\Pi}^\mathrm{o})=[\lambda_1^\mathrm{o},\lambda_2^\mathrm{o},\cdots,\lambda_{N_\mathrm{s}N_\mathrm{t}L}^\mathrm{o},0]^\mathrm{T}
\end{align}
and the rank of $\mathbf{\Pi}^\mathrm{o}$ is determined by the number of non-zero Lagrangian multipliers $\lambda_q^\mathrm{o}$ for  $q \in \mathcal{N}_\mathrm{stL}$.
It is easy to verify that $\widehat{\mathbf{\Gamma}}_\mathrm{c}\succeq \mathbf{0}$ is also a diagonal matrix with $N_\mathrm{c}N_\mathrm{t}L$ non-zero elements which with the same value of 1, and thus $\mathrm{rank}(\widehat{\mathbf{\Gamma}}_\mathrm{c})=N_\mathrm{c}N_\mathrm{t}L$.
Hence, we can obtain
\begin{align}\label{Pi_I}
\mathbf{\Pi}^\mathrm{o}+\nu^\mathrm{o} \mathbf{I} +\varrho^\mathrm{o}\widehat{\mathbf{\Gamma}}_\mathrm{c}\succ \mathbf{0},
\end{align}
by considering the fact  $\mathbf{\Pi}^\mathrm{o}$, $\nu^\mathrm{o}$ and $\varrho^\mathrm{o}$ are uniquely determined and the special case of $\nu^\mathrm{o}=0$ rarely occurs.
Based on the definition of $\mathbf{Q}$ in \eqref{Matrix_Q} and the non-negative property of objective function  \eqref{eq:OFDM043_0}, we know that $\mathbf{Q}\succeq 0$.
Hence, we can further derive that
\begin{align}
\mathbf{\Xi}\triangleq\mathbf{Q}+\mathbf{\Pi}^\mathrm{o}+\nu^\mathrm{o} \mathbf{I} +\varrho^\mathrm{o}\widehat{\mathbf{\Gamma}}_\mathrm{c}\succ \mathbf{0},
\end{align}
which can be verified by contradiction. Assuming that $\mathbf{\Xi}\preceq \mathbf{0}$, there exists at least one vector $\mathbf{x} \in \mathbb{C}^{(N_\mathrm{s}N_\mathrm{t}L+1)\times1}$ such that $\mathbf{x}^\mathrm{H} \mathbf{\Xi} \mathbf{x}\leq 0$, also we have
\begin{align}
\mathbf{x}^\mathrm{H} \mathbf{\Psi}^\mathrm{o} \mathbf{x}=\mathbf{x}^\mathrm{H} \mathbf{\Xi} \mathbf{x}-\varphi^\mathrm{o}  \mathbf{x}^\mathrm{H} \mathbf{\Pi}_{N_\mathrm{s}N_\mathrm{t}L+1} \mathbf{x}\geq 0
\end{align}
due to the fact that $\mathbf{\Psi}^\mathrm{o} \succeq \mathbf{0}$. Considering that $\mathbf{\Pi}_{N_\mathrm{s}N_\mathrm{t}L+1} \succeq \mathbf{0}$ as well, we have $\mathbf{x}^\mathrm{H} \mathbf{\Pi}_{N_\mathrm{s}N_\mathrm{t}L+1} \mathbf{x}\geq0$, leading to the results of $\mathbf{x}^\mathrm{H} \mathbf{\Psi}^\mathrm{o} \mathbf{x}=0$, $\mathbf{x}^\mathrm{H} \mathbf{\Pi}_{N_\mathrm{s}N_\mathrm{t}L+1} \mathbf{x}=0$ and $\mathbf{x}^\mathrm{H} \mathbf{\Xi} \mathbf{x}=0$. In addition, $\mathbf{x}^\mathrm{H} \mathbf{Q} \mathbf{x}\geq0$, and thus $\mathbf{x}^\mathrm{H}(\mathbf{\Pi}^\mathrm{o}+\nu^\mathrm{o} \mathbf{I} +\varrho^\mathrm{o}\widehat{\mathbf{\Gamma}}_\mathrm{c})\mathbf{x}=\mathbf{x}^\mathrm{H} \mathbf{\Xi} \mathbf{x}-\mathbf{x}^\mathrm{H} \mathbf{Q} \mathbf{x}\leq0$, which contradicts the result in \eqref{Pi_I}.

Based on the expression \eqref{Omega_opt}, we further have
\begin{align}
\mathrm{rank}(\mathbf{\Psi})&\geq \mathrm{rank}(\mathbf{\Xi})-\mathrm{rank}(\mathbf{\mathbf{\Pi}_{N_\mathrm{s}N_\mathrm{t}L+1}})\nonumber\\
&\geq N_\mathrm{s}N_\mathrm{t}L+1-1=N_\mathrm{s}N_\mathrm{t}L.
\end{align}
Since $\mathrm{Tr}\big(\mathbf{\Psi}\widehat{\mathbf{G}}\big)=0$ with $\mathbf{\Psi}\succeq\mathbf{0}$ and $\widehat{\mathbf{G}}\succeq\mathbf{0}$, it follows that $\mathbf{\Psi}\widehat{\mathbf{G}}=\mathbf{0}$ and $\mathrm{rank}(\mathbf{\Psi})+\mathrm{rank}(\widehat{\mathbf{G}})\leq N_\mathrm{s}N_\mathrm{t}L+1$, and thus
\begin{align}
\mathrm{rank}(\widehat{\mathbf{G}})\leq N_\mathrm{s}N_\mathrm{t}L+1-\mathrm{rank}(\mathbf{\Psi})\leq1.
^{}\end{align}
Also, $\mathrm{rank}(\widehat{\mathbf{G}})\geq 1$ as it is a non-zero matrix, leading to the final result of $\mathrm{rank}(\widehat{\mathbf{G}})=1$.

\bibliographystyle{IEEEtran}
\bibliography{DFRC_OFDM}

\end{document}